\documentclass[journal]{IEEEtran}
\usepackage{placeins}
\usepackage{diagbox}
\usepackage[inline]{enumitem}
\usepackage{xcolor}
\usepackage{amsmath,amsfonts,amssymb}
\usepackage{algorithm}
\usepackage[group-separator={,},group-minimum-digits=4]{siunitx}
\usepackage{threeparttable}
\usepackage{array}
\usepackage[caption=false,font=normalsize,labelfont=sf,textfont=sf]{subfig}
\usepackage{textcomp}
\usepackage{stfloats}
\usepackage{url}
\usepackage{verbatim}
\usepackage{graphicx}
\usepackage{algpseudocode}
\usepackage{cite}
\usepackage{balance}
\usepackage{mathrsfs}
\usepackage[colorlinks=true,
            linkcolor=blue,
            anchorcolor=blue,
            citecolor=blue]{hyperref}
\newtheorem{theorem}{Theorem}

\AtBeginDocument{%
  \setlength{\abovedisplayskip}{4pt plus 1pt minus 1pt}%
  \setlength{\belowdisplayskip}{4pt plus 1pt minus 1pt}%
  \setlength{\abovedisplayshortskip}{3pt plus 1pt}%
  \setlength{\belowdisplayshortskip}{3pt plus 1pt minus 1pt}%
}
\begin{document}

\title{ 
\fontsize{17.5pt}{\baselineskip}\selectfont
{Cyclic-Prefix OFDM Probing for Spatial-ISI-Free Distributed Acoustic Sensing via Frequency-Domain Channel Reconstruction}
} 
\author{ 
{
    Huan~Huang,~\textit{Member,~IEEE},
    Zhiyang~Xue,
    Ziang~Chen,
    Zhongxing~Tian,
    Dongdong~Zou,
    Gangxiang~Shen,~\textit{Senior~Member,~IEEE,~Fellow,~Optica},
    and Yi~Cai,~\textit{Fellow,~Optica}
}
\thanks{  

H.~Huang, Z.~Xue, Z.~Chen, Z.~Tian, D.~Zou, G.~Shen, and Y.~Cai are with the School of Electronic and Information Engineering, Soochow University, Suzhou, Jiangsu 215006, China 
(e-mail: hhuang1799@gmail.com; zyxue999@stu.suda.edu.cn; zachen@stu.suda.edu.cn; zxtian@ieee.org; ddzou@suda.edu.cn; shengx@suda.edu.cn; yicai@suda.edu.cn).   
}
}
\maketitle

\begin{abstract}
In matched-filter-based pulse-compression distributed acoustic sensing (DAS), nonzero compression sidelobes cause deterministic inter-range-bin leakage, known as spatial inter-symbol interference (ISI), and may produce false responses in reconstructed Rayleigh-backscatter traces. 
In this work, we propose a cyclic-prefix orthogonal frequency-division multiplexing (CP-OFDM) DAS system for phase-sensitive optical time-domain reflectometry ($\phi$-OTDR) that uses a data-bearing CP-OFDM waveform directly as the sensing probe. The same waveform recovers forward communication data, providing an initial demonstration of shared-waveform integrated sensing and communication (ISAC). 
To the best of our knowledge, this is the first time distributed Rayleigh backscattering has been formulated as a finite-memory sensing multipath channel. Based on this formulation, we rigorously prove that, if the useful OFDM and CP lengths cover the sensing multipath channel memory, then CP removal, one-tap frequency-domain equalization, and inverse discrete Fourier transform reconstruct each range-bin coefficient without deterministic waveform-induced spatial ISI, enabling spatial-ISI-free phase demodulation. 
For a simulated 5.2-km link containing ten simultaneous strong and weak events with intragroup spacings of 5.31--5.83 m, the proposed CP-OFDM-aided frequency-domain receiver suppresses spatial-ISI-induced off-event leakage and improves the phase-trace mean-square error by a maximum of 29.55~dB over a matched-filter-based pulse-compression receiver. 
In a heterodyne coherent experiment over a 5.2-km fiber link with a 111.984-MHz occupied bandwidth, 500-Hz PZT-induced vibrations are blindly localized at 5.071 and 5.066 km under 5- and 1-V PZT drives, respectively, and their waveforms are recovered with correlation coefficients of 0.990 and 0.962.
The same data-bearing probe also recovers an image with zero measured bit-error rate and a median error vector magnitude of -23.14~dB. 
These results validate the CP-OFDM-aided frequency-domain channel reconstruction for spatial-ISI-free DAS and demonstrate its potential for shared-waveform optical-fiber ISAC.
\end{abstract}

\begin{IEEEkeywords}
Distributed acoustic sensing, inter-symbol interference, orthogonal frequency-division multiplexing, channel reconstruction, integrated sensing and communication.
\end{IEEEkeywords}

\section{Introduction}\label{Intro}
\IEEEPARstart{D}{istributed} acoustic sensing (DAS) transforms an optical fiber into a dense acoustic aperture by detecting changes induced by perturbations along the propagation path. Beyond the general principles and system trade-offs summarized in~\cite{HartogDOFSBook,LuDOFSReview}, recent dark-fiber demonstrations have shown its practical value. The same installed fiber infrastructure can observe ocean and earthquake wave fields, image earthquake rupture processes, and support city-scale sensing~\cite{LindseyScience2019,LiNatureDAS2023,LiuNatCommUrban2025}. These applications require a long sensing range, high sensitivity, and reliable spatial discrimination when multiple acoustic events occur along the fiber.

Existing DAS systems can be generally divided into two categories: forward-transmission/interferometric and backward-reflectometric.
The forward route detects perturbations carried by the transmitted optical field. In contrast, the backward route resolves Rayleigh backscattering from one fiber end.
The single-end-access capability of the backward route is particularly attractive for field-deployed fiber links, where access to the far end may be unavailable or impractical. Therefore, this paper focuses on backward Rayleigh-backscatter DAS. 
In backward DAS, phase-sensitive optical time-domain reflectometry ($\phi$-OTDR) maps the time of flight of a pulse to position and extracts vibration information from coherent Rayleigh backscattering.
Short-pulse $\phi$-OTDR established the basic framework for intrusion and vibration sensing, as described in~\cite{JuarezJLT2005,LuJLT2010}. Since then, field and long-haul studies have advanced this approach to practical applications and extended spans through long-perimeter deployment, Raman amplification, Brillouin amplification, and high-sensitivity pipeline monitoring, as detailed in~\cite{JuarezAO2007,MartinsJLT2014,WangOL2014,PengOE2014}.

Backward DAS also includes optical frequency-domain reflectometry (OFDR) routes. Rayleigh-scatter OFDR and frequency-modulated continuous-wave reflectometry provide distributed measurements with high spatial resolution by resolving frequency-to-delay information instead of launching isolated short pulses~\cite{FroggattAO1998,VenkateshAO1990}. 
More recent $\Phi$-OFDR and optical frequency-domain DAS systems have demonstrated high-frequency or crosstalk-suppressed acoustic sensing~\cite{MarconOE2019,LiOL2020}, which show that backscatter can be treated as a distributed channel in the frequency domain. 
However, their sweep linearity, range, and update-rate constraints differ from those of the pulsed $\phi$-OTDR. 
For pulsed $\phi$-OTDR, spatial resolution and probe energy exhibit a fundamental trade-off. While a shorter pulse improves spatial resolution, it reduces probe energy under peak-power and nonlinear-effect constraints.

Broadband pulse-compression probing alleviates this trade-off by using a long-duration waveform to increase the probe energy and a large bandwidth to retain fine spatial resolution, followed by matched-filter-based pulse compression at the receiver.
Chirped-pulse $\phi$-OTDR enabled single-shot distributed temperature and strain tracking~\cite{PastorOE2016}, while optical pulse-compression reflectometry formalized the use of long coded or chirped probes for high-resolution reflectometry~\cite{ZouOE2015}. Subsequent developments have addressed practical limitations of this approach, including phase-noise compensation in pulse-compression DAS~\cite{PineiroJLT2023} and time-expanded $\phi$-OTDR for high spatial resolution with reduced detection bandwidth~\cite{SorianoLSA2021}.

Coding, multiplexing, and diversity have further improved the robustness of $\phi$-OTDR. 
Perfect periodic correlation codes and bipolar coding exploit the correlation structure to enhance range, noise performance, fading compensation, or frequency-drift tolerance~\cite{SaguesOE2021,WuJLT2020}. Coherent multiple-input multiple-output (MIMO) sensing and fully digital MIMO orthogonal frequency-division multiplexing (MIMO-OFDM) $\Delta\phi$-OTDR introduce polarization/frequency diversity to mitigate fading~\cite{GuerrierOE2020,GuerrierOE2021}. 
Recent multi-layer, multi-frequency, and polyphase-code receivers continue this trend by suppressing fading or crosstalk through structured multiplexing, frequency diversity, or correlation processing~\cite{TLSMOTDR,WakisakaJLT2025,DengJLT2026}. 
Taken together, the literature has shifted progressively from simple, short pulses to waveforms with a larger time-bandwidth product and a richer coding structure.

However, most broadband-pulse and coded-probe DAS receivers still rely on matched-filter-based pulse compression. In these receivers, the recovered spatial trace is the Rayleigh-backscatter channel blurred by the autocorrelation response of the probing waveform, rather than the channel itself. Pulse-compression theory clarifies the underlying reason: a practical finite-duration and bandwidth-limited waveform has a finite compression response with a main lobe and sidelobes~\cite{LevanonRadarSignals,RichardsRadar}. In optical pulse-compression reflectometry, sidelobe apodization can reduce leakage but broadens the main lobe, thereby degrading spatial resolution~\cite{MompoOL2018}. Positive/negative swept pulses can extend the effective swept bandwidth, while asymmetric correlator receivers can suppress coding crosstalk; both still operate within a pulse-compression framework~\cite{XiaoJLT2024,DengJLT2026}. Consequently, residual sidelobe leakage appears as waveform-induced spatial inter-symbol interference (ISI), especially when a strong perturbation is located near a weak event.

This observation motivates a different DAS reconstruction paradigm: rather than accepting an autocorrelation-blurred matched-filter output, the receiver should directly reconstruct the distributed Rayleigh-backscatter channel over the sensing window.
To this end, we formulate the fast-time distributed Rayleigh-backscatter response as a finite-memory sensing multipath channel and propose a data-bearing cyclic-prefix orthogonal frequency-division multiplexing (CP-OFDM) DAS system for $\phi$-OTDR.
With a sufficiently long cyclic prefix, the linear backscatter convolution becomes circular convolution, enabling one-tap frequency-domain equalization and inverse discrete Fourier transform (IDFT)-based reconstruction of the range-bin coefficients without deterministic waveform-induced spatial ISI. At a forward communication receiver, the same waveform carries a recoverable data payload, providing an initial demonstration of shared-waveform integrated sensing and communication (ISAC).
The main contributions are summarized as follows:
\begin{itemize}[leftmargin=*,topsep=2pt,itemsep=2pt,parsep=0pt]
\item
We establish a unified discrete-time model for pulse-compression $\phi$-OTDR DAS and identify the mechanism of waveform-induced spatial ISI. Starting from the Rayleigh-backscatter reception model, we represent the sampled return as the linear convolution between a generic probing waveform and the distributed fiber response. 
We demonstrate that matched-filter-based pulse compression reconstructs the Rayleigh-backscatter channel convolved with the probing-waveform autocorrelation, rather than the channel itself. 
By separating the desired range-bin response from the sidelobe-induced leakage and propagating both terms through gauge-differential phase recovery, the analysis explains how deterministic inter-range-bin coupling produces amplitude bias, phase distortion, and false spatial responses, particularly when weak perturbations are located near strong events. This model also distinguishes waveform-induced spatial ISI from receiver noise, Rayleigh fading, and gauge overlap.

\item
To the best of our knowledge, this is the first work to model distributed Rayleigh backscattering as a finite-memory sensing multipath channel and process it through a CP-OFDM channel-reconstruction framework. In this formulation, fast time resolves the propagation delays of distributed range bins, while slow time describes perturbation-induced channel evolution.
Our formulation uses the cyclic prefix to convert the finite-memory linear convolution into circular convolution over the useful OFDM block.
When the useful OFDM and CP lengths cover the sensing multipath channel memory, known nonzero subcarrier symbols enable one-tap frequency-domain equalization. An inverse discrete Fourier transform then reconstructs each spatial channel tap for gauge-differential phase demodulation. The proposed receiver therefore estimates the distributed Rayleigh-backscatter channel itself rather than its waveform-autocorrelation-blurred version. Because subcarrier symbols can carry payload data, the same probe can support sensing-channel reconstruction and forward communication simultaneously.

\item
We rigorously establish the spatial-ISI-free property of the proposed CP-OFDM-aided frequency-domain channel reconstruction and clarify its spatial-resolution boundary. For a channel that is quasi-static within one OFDM probing block, a useful OFDM length no shorter than the channel and a cyclic prefix no shorter than the sensing multipath channel memory yield a reconstructed range-bin coefficient equal to the desired channel tap plus reconstructed noise, with no deterministic sidelobe leakage from other range bins. 
We further compare linear frequency-modulated (LFM) probing with matched-filter-based pulse compression and CP-OFDM probing with frequency-domain channel reconstruction under the same occupied bandwidth. Both have the bandwidth-limited spatial resolving interval $v_g/(2B_{\rm occ})$. A grid-aligned scatterer has a Kronecker-delta reconstruction response, whereas an off-grid scatterer exhibits the finite-bandwidth Dirichlet response. The advantage is therefore the elimination of pulse-compression sidelobe coupling on the range-bin grid while preserving the physical resolution imposed by the occupied bandwidth.

\item
Simulations and experiments verify the analysis under dense-event and practical link conditions. In a 5.2-km fiber simulation with a 128-MHz bandwidth, ten simultaneous strong and weak events are arranged in three groups, with an intragroup spacing of 5.31--5.83 m. Compared with the matched-filter-based LFM receiver, the proposed CP-OFDM-aided frequency-domain receiver suppressed the off-event spatial floor and improved the phase-trace mean-square error by 1.94--29.55~dB for all ten events. 
A heterodyne coherent experiment over a 5.2-km fiber link with a 111.984-MHz occupied bandwidth further demonstrates blind localization and recovery of 500-Hz vibrations under 5- and 1-V PZT drives. Consequently, the PZT-induced vibrations are localized at 5071.1 and 5065.6 m, and the recovered waveforms achieve correlation coefficients of 0.990 and 0.962, respectively. Using the same data-bearing probe, an image is recovered with a zero measured bit-error rate and a median error vector magnitude of \(-23.14\)~dB, providing an initial shared-waveform ISAC validation.
\end{itemize}

The rest of this paper is organized as follows. Section~\ref{Princ} models the distributed Rayleigh-backscatter channel and analyzes waveform-induced spatial ISI in matched-filter-based DAS. Section~\ref{CPOFDM} presents the CP-OFDM frequency-domain channel-reconstruction receiver, proves its spatial-ISI-free property, and analyzes spatial resolution. Sections~\ref{Simu} and~\ref{Exp} present the simulation and experimental results, respectively. Finally, Section~\ref{Conclu} concludes this paper.


\section{Principles}\label{Princ}
In Section~\ref{SysMod}, we formulate the phase-sensitive optical time-domain reflectometry-based DAS system and derive the Rayleigh-backscatter reception model, where the distributed fiber response is represented as a finite-memory sensing multipath channel. 
Section~\ref{ISIByMF} describes acoustic-phase recovery for the conventional matched-filter-based pulse-compression receiver and quantifies the waveform-induced spatial ISI and its impact on the gauge-differential phase estimate.
\subsection{Phase-Sensitive Optical Time-Domain Reflectometry}\label{SysMod}

\begin{figure}[!t]
    \centering
    \includegraphics[width=0.45\textwidth]{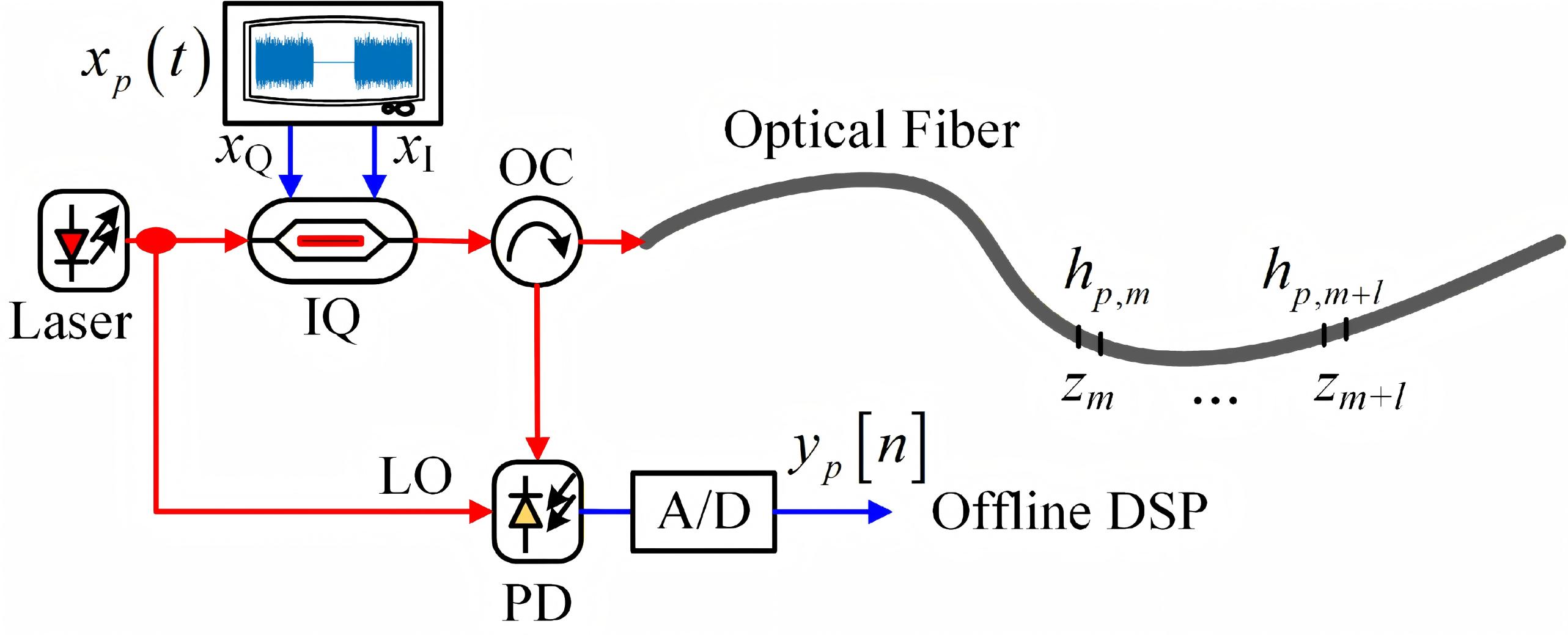}
    \caption{A DAS system based on phase-sensitive optical time-domain reflectometry ($\phi$-OTDR). IQ: in-phase/quadrature modulator; OC: optical circulator; PD: photodetector; A/D: analog-to-digital converter; DSP: digital signal processing.}
    \label{fig1}
\end{figure}

Fig.~\ref{fig1} illustrates a $\phi$-OTDR-based DAS system. The optical field from a narrow-linewidth laser is modulated by an in-phase/quadrature (IQ) modulator driven by a complex baseband probing waveform. 
Let \(x_p(t)\) denote the complex baseband probing waveform transmitted in the \(p\)-th probing period, where \(p\) is the slow-time index. In this subsection, \(x_p(t)\) is kept generic: short-pulse probes represent the standard $\phi$-OTDR intrusion and vibration implementations~\cite{JuarezJLT2005,LuJLT2010}; LFM probes represent pulse-compression OTDR-type probing~\cite{PastorOE2016,ZouOE2015}; and digitally coded probes cover periodic-code and polyphase-code DAS schemes~\cite{SaguesOE2021,DengJLT2026}. 

In the complex-envelope representation, the optical field launched in the \(p\)-th probing period is
\begin{equation}
    E_{{\rm tx},p}(t)=\sqrt{P_{\rm tx}}\,x_p(t),
    \label{eq:tx_field}
\end{equation}
where \(P_{\rm tx}\) is the launched optical power. The modulated optical signal is launched into the fiber through the optical circulator (OC). The Rayleigh-backscatter field is then coherently mixed with a local oscillator (LO) by using a photodetector (PD), sampled by the analog-to-digital (A/D) converter, and processed offline.

The backscattered field is formed through the coherent superposition of Rayleigh scattering from all infinitesimal fiber segments. Under the single-scattering approximation, the received complex baseband signal can be modeled as~\cite{LuJLT2010,MasoudiOE2017}
\begin{equation}
    y_p(t)
    =\int_{0}^{L}h_p(z) x_p\!\left(t-\tau(z)\right)dz + w_p(t),
    \label{eq:cont_backscatter}
\end{equation}
where \(L\) is the fiber length, \(w_p(t)\) denotes the equivalent receiver noise, $\tau(z)= {2z}/{v_g}$ is the round-trip propagation delay from the fiber input to position \(z\), and $v_g={c}/{n_g}$ is the group velocity in the fiber with \(c\) and \(n_g\) denoting the speed of light in vacuum and the group index, respectively. Detection-noise sources and their impact in direct-detection and coherent $\phi$-OTDR systems have been characterized in~\cite{LuOE2021Noise,VidalOE2023Noise}.  

In~\eqref{eq:cont_backscatter}, the distributed channel response \(h_p(z)\) denotes the equivalent complex Rayleigh-backscatter coefficient at position \(z\) during the \(p\)-th probing period. It accounts for the random scattering response, static round-trip propagation effects, fiber attenuation, and perturbation-induced phase variation. A compact representation is given by~\cite{MasoudiOE2017}
\begin{equation}
    h_p(z)=\rho(z)e^{-2\alpha z}e^{-j2\beta_0 z}e^{j\phi_p(z)},
    \label{eq:continuous_channel}
\end{equation}
where \(\rho(z)\) is the random complex Rayleigh scattering coefficient, \(\alpha\) is the amplitude attenuation coefficient of the fiber, \(\beta_0\) is the propagation constant at the optical carrier frequency, and \(\phi_p(z)\) denotes the vibration-induced phase perturbation at slow time \(p\). The constant optical and electrical gains of the coherent receiver are absorbed into \(h_p(z)\) and \(w_p(t)\).

After sampling the received signal at a sampling interval $T_s= {1}/{F_s}$, the standard $\phi$-OTDR time-of-flight mapping gives the spatial sampling interval
\begin{equation}
    \Delta z = \frac{v_gT_s}{2}.
    \label{eq:dz}
\end{equation}
Here, \(\Delta z\) in \eqref{eq:dz} is the spatial sampling interval of the discretized range bins, not necessarily the final spatial resolution of the DAS system. The achievable spatial resolution is also determined by the probing waveform bandwidth, the reconstruction or pulse-compression response, and the gauge operation used in the subsequent digital signal processing (DSP). 
Consequently, the \(m\)-th discrete range bin is located at
\begin{equation}
    z_m=m\Delta z,
    \label{eq:range_bin}
\end{equation}
where $m=0,1,\ldots,M-1$ and \(M \approx L/\Delta z\) is the number of discrete range bins. According to~\eqref{eq:dz}, the round-trip delay of the \(m\)-th bin satisfies
\begin{equation}
    \tau_m
    = \frac{2z_m}{v_g} = mT_s.
    \label{eq:integer_delay}
\end{equation}
Therefore, on this sampling grid, the contribution from the \(m\)-th range bin is a delayed version of the transmitted probing sequence.

Define $x_p[n]\triangleq x_p(nT_s)$ and $y_p[n]\triangleq y_p(nT_s)$,
and let \(h_{p,m}\) denote the aggregate complex Rayleigh-backscatter response within the \(m\)-th range bin. The sampled received signal can then be represented as the finite-memory linear convolution
\begin{equation}
    y_p[n]=\sum_{m=0}^{M-1} h_{p,m} x_p[n-m] + w_p[n],
    \label{eq:discrete_backscatter}
\end{equation}
where \(w_p[n]\triangleq w_p(nT_s)\). 

Eq.~\eqref{eq:discrete_backscatter} shows that the distributed Rayleigh-backscatter response can be viewed as a finite-memory multipath channel along the fiber. We refer to this finite-memory channel as the ``sensing multipath channel.'' The fast-time index \(n\) resolves propagation delay, or equivalently distance, whereas the slow-time index \(p\) describes the temporal evolution of the scattering channel induced by external acoustic perturbations.
The following subsections use \eqref{eq:discrete_backscatter} to compare the conventional matched-filter-based pulse-compression receiver with the proposed CP-OFDM-aided frequency-domain channel-reconstruction receiver.

\subsection{Spatial ISI in Matched-Filter-Based DAS}\label{ISIByMF}
Based on the discrete backscatter model in \eqref{eq:discrete_backscatter}, the conventional pulse-compression receiver applies matched filtering to the received sequence using the transmitted probing sequence as the reference, as in radar pulse compression and optical pulse-compression reflectometry~\cite{RichardsRadar,MompoOL2018}. 
For notational simplicity, we assume that the same probing sequence is transmitted in each probing period, i.e., \(x_p[n]=s[n]\). The received signal in the \(p\)-th probing period then becomes
\begin{equation}
    y_p[n]
    =
    \sum_{m=0}^{M-1}
    h_{p,m}s[n-m]
    +
    w_p[n].
    \label{eq:mf_rx_model}
\end{equation}

The matched-filter output at range-bin index \(\ell\) is
\begin{equation}
    r_p[\ell]
    =
    \sum_n
    y_p[n]s^H[n-\ell].
    \label{eq:mf_output_def}
\end{equation}
Substituting \eqref{eq:mf_rx_model} into \eqref{eq:mf_output_def} gives
\begin{align}
    r_p[\ell]
    &=
    \sum_{m=0}^{M-1}
    h_{p,m}
    \sum_n
    s[n-m]s^H[n-\ell]
    +
    v_p[\ell]       \notag \\
    &=
    \sum_{m=0}^{M-1}
    h_{p,m}
    R_s[\ell-m]
    +
    v_p[\ell],
    \label{eq:mf_output_conv}
\end{align}
where $R_s[q]=\sum\nolimits_n s[n]s^H[n-q]$
is the aperiodic autocorrelation function of the probing sequence, and $v_p[\ell]=\sum\nolimits_n w_p[n]s^H[n-\ell]$ is the filtered noise term. 
Therefore, the matched-filter output is not the distributed Rayleigh-backscatter channel itself; rather, it is the convolution of the channel with the autocorrelation response of the transmitted probing sequence.

Let
\begin{equation}
    E_s = R_s[0] = \sum_n |s[n]|^2
    \label{eq:waveform_energy}
\end{equation}
denote the probing waveform energy, and define the normalized autocorrelation response as $c_s[q]= {R_s[q]}/{E_s}$.
After energy normalization, the matched-filter-based channel estimation is
\begin{align}
    \hat{h}^{\rm MF}_{p,\ell}
    &=
    \frac{r_p[\ell]}{E_s}  \notag \\
    &=
    h_{p,\ell}
    +
    \underbrace{\sum\limits_{\substack{m=0, m\ne \ell}}^{M-1} h_{p,m}c_s[\ell-m]}_{\text{spatial ISI}}
    +
    \tilde{v}_p[\ell],
    \label{eq:mf_channel_est}
\end{align}
where \(\tilde{v}_p[\ell]=v_p[\ell]/E_s\). The first term in~\eqref{eq:mf_channel_est} is the desired response of the \(\ell\)-th range bin. The second term originates from the off-peak autocorrelation response of \(c_s[q]\) and represents leakage from other range bins into the \(\ell\)-th bin. This waveform-induced spatial interference is hereafter termed spatial ISI because it appears as probing-waveform-dependent coupling among spatial range bins.

Ideally, if the normalized autocorrelation were a Kronecker delta, i.e., $c_s[q]=\delta[q]$, 
then \eqref{eq:mf_channel_est} would reduce to
\begin{equation}
    \hat{h}^{\rm MF}_{p,\ell}
    =
    h_{p,\ell}
    +
    \tilde{v}_p[\ell],
    \label{eq:ideal_mf_channel}
\end{equation}
and waveform-induced spatial ISI would be absent. 

However, any finite-duration, bandwidth-limited probing waveform that can be realized in practice exhibits a nonzero off-peak pulse-compression response, which is a standard consequence of pulse-compression waveform design~\cite{LevanonRadarSignals,RichardsRadar}. For an LFM pulse, the main-lobe width is approximately inversely proportional to the swept bandwidth, whereas the sidelobe level depends on the time- or frequency-domain weighting. In optical pulse-compression reflectometry, the authors in~\cite{MompoOL2018} showed that windowing suppresses sidelobes, but it also broadens the main lobe, thereby degrading spatial resolution. Therefore, matched filtering typically yields a finite spatial point spread function rather than a delta response.

The impact of the spatial ISI term in \eqref{eq:mf_channel_est} is particularly severe in dense-event DAS scenarios. At the location of a weak perturbation, leakage from a nearby strong perturbation can be comparable to, or even greater than, the desired weak-event response. More specifically, for a weak event at bin \(\ell\) and a strong event at bin \(m\), the interference component $h_{p,m}c_s[\ell-m]$
may distort the recovered phase at bin \(\ell\), even when \(|c_s[\ell-m]|\ll 1\), because the channel magnitude \(|h_{p,m}|\) or the perturbation-induced slow-time phase variation at bin \(m\) can be much larger than that of the weak event.

In practical $\phi$-OTDR-based DAS processing, the acoustic phase is typically extracted through a gauge-differential operation between two range bins separated by a gauge length~\cite{GabaiOL2016}. This gauge length is closely related to the spatial sensitivity of DAS. Let \(G\) denote the gauge length in samples. Given the matched-filter output, the complex gauge product is
\begin{equation}
    \hat{g}^{\rm MF}_{p,\ell}
    =
    \hat{h}^{\rm MF}_{p,\ell+G}
    \left(\hat{h}^{\rm MF}_{p,\ell}\right)^{\! H}.
    \label{eq:mf_gauge_product}
\end{equation}
The differential phase is then recovered with respect to a static or initial frame as
\begin{equation}
    \Delta\hat{\phi}^{\rm MF}_{p,\ell}
    =
    \angle
    \left\{
    \hat{g}^{\rm MF}_{p,\ell}
    \left(\hat{g}^{\rm MF}_{0,\ell}\right)^H
    \right\}.
    \label{eq:mf_phase_recovery}
\end{equation}
To examine how the autocorrelation sidelobes affect phase recovery, we rewrite the matched-filter-based channel estimate in~\eqref{eq:mf_channel_est} as
\begin{equation}
    \hat{h}^{\rm MF}_{p,\ell}=h_{p,\ell}+\epsilon_{p,\ell},
    \label{eq:mf_error_def}
\end{equation}
where
\begin{equation}
    \epsilon_{p,\ell}
    =
    \sum_{\substack{m=0, m\neq \ell}}^{M-1}
    h_{p,m}c_s[\ell-m]
    +
    \tilde{v}_p[\ell]
    \label{eq:mf_error}
\end{equation}
collects the spatial-ISI leakage and filtered noise. Substituting \eqref{eq:mf_error_def} into \eqref{eq:mf_gauge_product} yields
\begin{align}
    \hat{g}^{\rm MF}_{p,\ell}
    &=
    h_{p,\ell+G}h^H_{p,\ell}
    +
    h_{p,\ell+G}\epsilon^H_{p,\ell}
    +
    \epsilon_{p,\ell+G}h^H_{p,\ell}
    +
    \epsilon_{p,\ell+G}\epsilon^H_{p,\ell}.
    \label{eq:mf_gauge_expansion}
\end{align}
The first term in \eqref{eq:mf_gauge_expansion} is the desired gauge response, whereas the remaining terms arise from matched-filter leakage and noise. Because the phase operator in \eqref{eq:mf_phase_recovery} is nonlinear, these leakage terms bias the gauge-product amplitude and distort the recovered slow-time differential phase trace. Thus, matched-filter-based DAS channel estimation produces weak-event distortion or false spatial responses when multiple perturbations fall within the effective support of the off-peak pulse-compression response.

Increasing the receiver sampling rate alone does not remove this impairment. For an occupied bandwidth \(B_{\rm occ}\), if \(F_s>B_{\rm occ}\), the spatial sampling interval \(\Delta z=v_g/(2F_s)\) decreases, and the pulse-compression response \(R_s[q]\) is sampled more finely. However, the spatial main-lobe width remains approximately \(v_g/(2B_{\rm occ})\), and the off-peak response remains present. Oversampling can improve peak localization and fractional-delay estimation, but it does not change the matched-filter relation in~\eqref{eq:mf_output_conv}. Therefore, the waveform-induced spatial ISI in~\eqref{eq:mf_channel_est} remains an inherent limitation of matched-filter-based pulse-compression receivers.

\section{Cyclic-Prefix OFDM Probing and Channel Reconstruction for Spatial-ISI-Free DAS}
\label{CPOFDM}
In Section~\ref{CPOFDMPrinciple}, we present the proposed CP-OFDM probing scheme, together with the corresponding frequency-domain channel reconstruction and gauge-differential phase-recovery procedure. 
In Section~\ref{CPOFDMTheorem}, we prove that, under a sufficient-CP condition, the proposed receiver reconstructs the distributed Rayleigh-backscatter channel without deterministic waveform-induced spatial ISI.
In Section~\ref{SpatialResolu}, we analyze the spatial resolution of the proposed CP-OFDM probing, including its bandwidth-limited resolving interval.

\subsection{CP-OFDM Probing and Phase Recovery}\label{CPOFDMPrinciple}
Fig.~\ref{fig:cpofdm_dsp} illustrates the proposed CP-OFDM probing and the frequency-domain channel-reconstruction procedure for matched-filter-free DAS.
As established in Section~\ref{SysMod}, the distributed Rayleigh-backscatter
response in~\eqref{eq:discrete_backscatter} can be treated as the sensing multipath channel.
The finite-memory linear convolution in~\eqref{eq:discrete_backscatter} has the same form as a finite-memory multipath-channel convolution~\cite{ShiehOFDMBook,GoldsmithBook}. 
Let \(N\) denote the length of one useful OFDM probing block, and let $\mathcal{K} = \{0,1,\ldots,N-1\}$ be the set of active subcarrier indices in the critically sampled full-band case. The frequency-domain probing weights are denoted by \(S[k]\), \(k\in\mathcal{K}\). 
The receiver knows \(S[k]\) and $S[k]\neq 0$ for $k=0,1,\ldots,N-1$.
In practice, constant-modulus weights, such as quadrature phase-shift keying (QPSK) symbols, are preferred to avoid noise enhancement during frequency-domain channel reconstruction.

\begin{figure}[!t]
    \centering
    \includegraphics[width=0.486\textwidth]{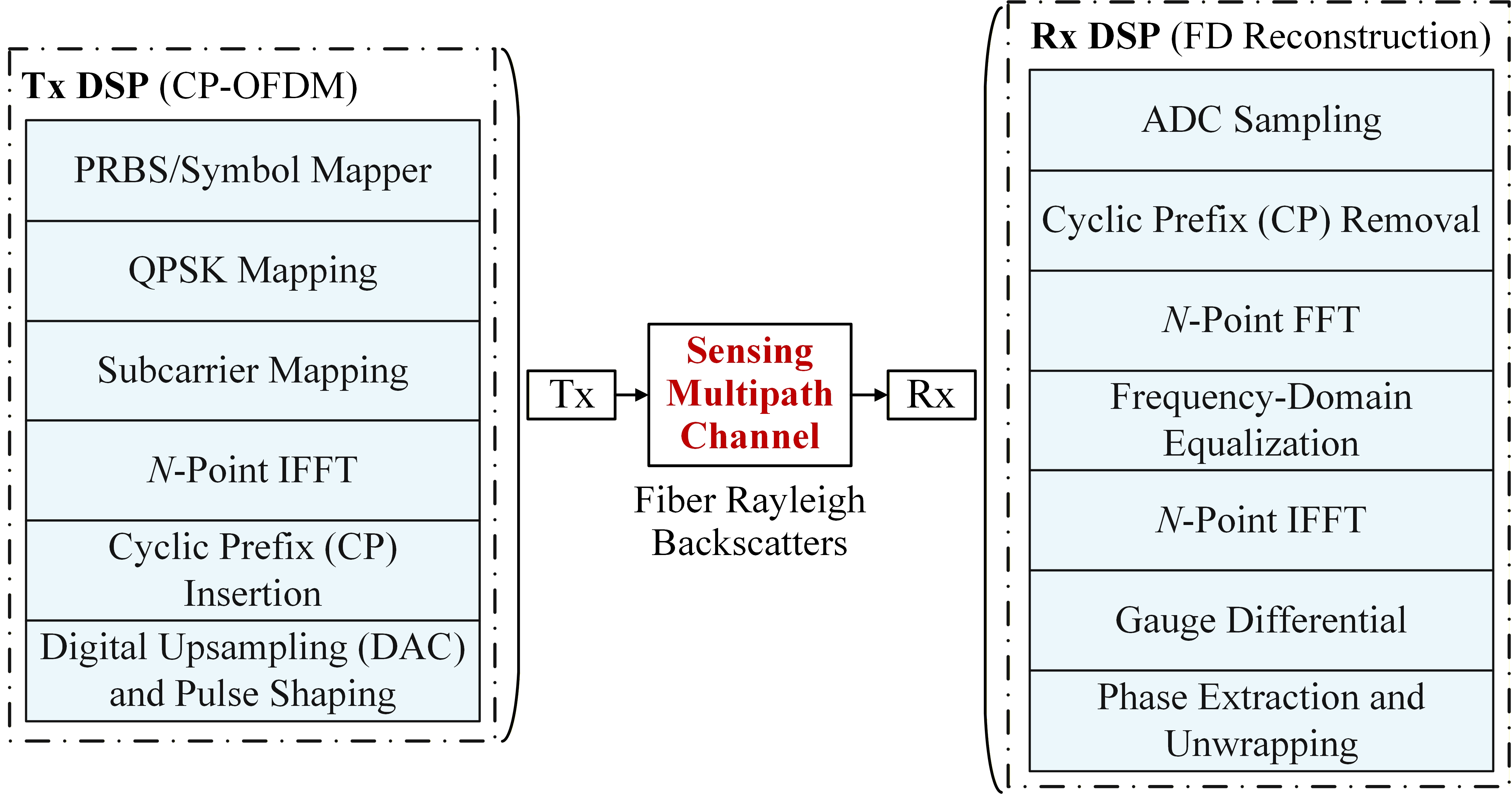}
    \caption{DSP procedure of the proposed CP-OFDM probing for matched-filter-free DAS with frequency-domain (FD) channel reconstruction.}
    \label{fig:cpofdm_dsp}
\end{figure}

The useful OFDM probing block is generated by an IDFT, following the standard baseband OFDM construction used in wireless and optical communications~\cite{GoldsmithBook,ArmstrongJLT2009}:
\begin{equation}
    s[n]
    =
    \frac{1}{N}
    \sum_{k=0}^{N-1}
    S[k]e^{j2\pi kn/N},
    \label{eq:ofdm_symbol}
\end{equation}
where $n=0,1,\ldots,N-1$.
A cyclic prefix of length \(L_{\rm CP}\) is then appended to form
\begin{equation}
    s_{\rm CP}[n]
    =
    \begin{cases}
    s[n+N], & n=-L_{\rm CP},\ldots,-1,\\
    s[n],   & n=0,\ldots,N-1.
    \end{cases}
    \label{eq:cp_signal}
\end{equation}

The transmitted probing sequence in the \(p\)-th probing period is therefore $x_p[n]=s_{\rm CP}[n]$, 
where the same OFDM probing block is used for channel reconstruction over different slow-time indices. 
The dynamic acoustic perturbation is represented by the slow-time differential phase evolution of the distributed channel taps \(h_{p,m}\).

Substituting the CP-OFDM probing sequence into the received signal model in~\eqref{eq:discrete_backscatter} gives
\begin{equation}
    y_p[n]
    =
    \sum_{m=0}^{M-1}
    h_{p,m}s_{\rm CP}[n-m]
    +
    w_p[n].
    \label{eq:cpofdm_rx_linear}
\end{equation}
The CP length satisfies $L_{\rm CP}\ge M-1$, so that the cyclic prefix fully covers the memory of the distributed Rayleigh-backscatter channel. 

After discarding the CP, the receiver selects the \(N\)-sample useful OFDM observation window and obtains $u_p[n]=y_p[n]$ for $n=0,1,\ldots,N-1$. The time origin is set at the beginning of the CP-removed window. Because $L_{\rm CP}\ge M-1$, the linear convolution in \eqref{eq:cpofdm_rx_linear} becomes a circular convolution over the useful OFDM block~\cite{ShiehOFDMBook,GoldsmithBook}
\begin{equation}
    u_p[n]
    =
    \sum_{m=0}^{M-1}
    h_{p,m}s[(n-m)_N]
    +
    w_p[n],
    \label{eq:circular_conv}
\end{equation}
where $n=0,\ldots,N-1$ and \((\cdot)_N\) denotes modulo-\(N\) indexing.

Taking the \(N\)-point discrete Fourier transform (DFT) of \eqref{eq:circular_conv} yields~\cite{ShiehOFDMBook,GoldsmithBook}
\begin{equation}
    U_p[k]
    =
    S[k]H_p[k]+W_p[k],
    \label{eq:freq_domain_model}
\end{equation}
where $k=0,1,\ldots,N-1$ and 
\begin{equation}
    H_p[k]
    =
    \sum_{m=0}^{M-1}
    h_{p,m}e^{-j2\pi km/N}
    \label{eq:channel_freq_response}
\end{equation}
represents the frequency response of the distributed Rayleigh-backscatter channel, and \(W_p[k]\) is the DFT of \(w_p[n]\).

Because \(S[k]\) is known and nonzero, the zero-forcing (ZF) channel estimation is~\cite{ShiehOFDMBook,GoldsmithBook}
\begin{equation}
    \hat{H}_p[k]
    =
    \frac{U_p[k]}{S[k]}.
    \label{eq:zf_channel_est}
\end{equation}
For noisy measurements or nonflat front-end responses, a regularized ZF or minimum mean-square error (MMSE) estimator can be used:
\begin{equation}
    \hat{H}_p[k]
    =
    \frac{S^H[k]}{|S[k]|^2+\lambda}
    U_p[k],
    \label{eq:mmse_channel_est}
\end{equation}
where \(\lambda\ge0\) is a regularization parameter. In the following analysis, \eqref{eq:zf_channel_est} is used to reveal the spatial-ISI-free property of our proposed CP-OFDM probing in closed form.

The discrete range-domain channel is then reconstructed by the IDFT, i.e., 
\begin{equation}
    \hat{h}^{\rm OFDM}_{p,m}
    =
    \frac{1}{N}
    \sum_{k=0}^{N-1}
    \hat{H}_p[k]e^{j2\pi km/N},
    \label{eq:ifft_channel_reconstruction}
\end{equation}
where $m=0,1,\ldots,N-1$. Under \(N\ge M\), the first \(M\) samples correspond to the discretized fiber range bins. Unlike the matched-filter-based channel estimation in \eqref{eq:mf_channel_est}, \eqref{eq:ifft_channel_reconstruction} reconstructs the channel through frequency-domain inversion rather than pulse compression. Therefore, it does not contain the deterministic sidelobe-leakage term in~\eqref{eq:mf_channel_est}, i.e., the waveform-induced spatial ISI.

After obtaining \(\hat{h}^{\rm OFDM}_{p,m}\), the acoustic phase is extracted using the same gauge-differential operation as in conventional coherent DAS. Let \(G\) denote the gauge length in samples. The complex gauge product is
\begin{equation}
    \hat{g}^{\rm OFDM}_{p,\ell}
    =
    \hat{h}^{\rm OFDM}_{p,\ell+G}
    \left(\hat{h}^{\rm OFDM}_{p,\ell}\right)^H,
    \label{eq:ofdm_gauge_product}
\end{equation}
and the recovered differential phase is obtained with respect to a static or initial frame:
\begin{equation}
    \Delta\hat{\phi}^{\rm OFDM}_{p,\ell}
    =
    \angle
    \left\{
    \hat{g}^{\rm OFDM}_{p,\ell}
    \left(\hat{g}^{\rm OFDM}_{0,\ell}\right)^H
    \right\}.
    \label{eq:ofdm_phase_recovery}
\end{equation}
Therefore, the proposed CP-OFDM-aided frequency-domain channel-reconstruction receiver follows the path
\begin{equation*}
    y_p[n]
    \rightarrow
    u_p[n]
    \rightarrow
    \hat{H}_p[k]
    \rightarrow
    \hat{h}^{\rm OFDM}_{p,m}
    \rightarrow
    \Delta\hat{\phi}^{\rm OFDM}_{p,\ell}, 
\end{equation*}
rather than
\begin{equation*}
    y_p[n] \rightarrow r_p[\ell] \rightarrow \hat{h}^{\rm MF}_{p,\ell} \rightarrow \Delta\hat{\phi}^{\rm MF}_{p,\ell}. 
\end{equation*}
This distinction is central: the proposed CP-OFDM receiver performs frequency-domain channel reconstruction, whereas traditional LFM receivers rely on matched-filter-based pulse compression.

\subsection{Spatial-ISI-Free Channel Reconstruction}
\label{CPOFDMTheorem}
In the proposed CP-OFDM probing, the discrete spatial-domain channel is reconstructed using~\eqref{eq:ifft_channel_reconstruction}. 
The following theorem formalizes the spatial-ISI-free property of the CP-OFDM channel reconstruction.

\begin{theorem} 
\label{thm:spatial_isi_free}
Consider the distributed Rayleigh-backscatter model in \eqref{eq:discrete_backscatter}. Assume that: 
\begin{itemize}
    \item the distributed channel \(h_{p,m}\), \(m=0, 1, \ldots,M-1\), is quasi-static within one OFDM probing block;
    \item the useful OFDM length satisfies \(N\ge M\) and the CP length satisfies \(L_{\rm CP}\ge M-1\);
    \item all probing weights are known and nonzero, i.e., \(S[k]\neq0\) for \(k=0, 1, \ldots,N-1\).
\end{itemize}
Then the ZF channel reconstruction in \eqref{eq:zf_channel_est} and \eqref{eq:ifft_channel_reconstruction} yields
\begin{equation}
    \hat{h}^{\rm OFDM}_{p,m}
    =
    h_{p,m}
    +
    \tilde{w}_{p,m},
    \label{eq:ofdm_reconstruction_result}
\end{equation}
where $m=0,1,\ldots,M-1$ and 
\begin{equation}
    \tilde{w}_{p,m} = \frac{1}{N}
    \sum_{k=0}^{N-1}
    \frac{W_p[k]}{S[k]}
    e^{j2\pi km/N}
    \label{eq:ofdm_noise_term}
\end{equation}
is the reconstructed noise term. In the absence of noise, \(\hat{h}^{\rm OFDM}_{p,m}=h_{p,m}\) exactly. The reconstruction contains no deterministic sidelobe leakage, i.e.,  
\(\sum_{q\neq m} h_{p,q}c_s[m-q]\), and is therefore spatial-ISI-free.
\end{theorem}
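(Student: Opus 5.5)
The argument follows the receiver chain forward, from the time-domain model \eqref{eq:cpofdm_rx_linear} to the range-domain estimate \eqref{eq:ifft_channel_reconstruction}, and uses each hypothesis of the theorem at exactly one step.

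\emph{Step 1: circular convolution.} Fix $p$. By quasi-staticity, the taps $h_{p,m}$ are constant over the $N+L_{\rm CP}$ samples of the block, so \eqref{eq:cpofdm_rx_linear} is a genuine time-invariant linear convolution there. For $n\in\{0,\ldots,N-1\}$ and $m\in\{0,\ldots,M-1\}$ we have $n-m\ge -(M-1)\ge -L_{\rm CP}$. Hence every index $s_{\rm CP}[n-m]$ lies inside the transmitted block. By \eqref{eq:cp_signal}, this sample equals $s[n-m]$ if $n-m\ge0$ and $s[n-m+N]$ otherwise; in both cases it is $s[(n-m)_N]$. This establishes \eqref{eq:circular_conv} with no residual inter-block term. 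This is the only place the CP condition $L_{\rm CP}\ge M-1$ enters. I expect it to be the one step needing care: I would state explicitly that no sample from a previous probing period, or from outside the CP, is touched, so the useful window contains only the current block.

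\emph{Step 2: diagonalization and ZF.} Substitute the IDFT definition \eqref{eq:ofdm_symbol} into \eqref{eq:circular_conv} and take the $N$-point DFT. The $N$-periodicity of $e^{j2\pi k(n-m)/N}$ together with DFT orthogonality gives \eqref{eq:freq_domain_model}, with $H_p[k]$ as in \eqref{eq:channel_freq_response}. Since $S[k]\ne0$ for every $k$, dividing yields
\[
\hat H_p[k]=H_p[k]+\frac{W_p[k]}{S[k]}.
\]

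\emph{Step 3: IDFT inversion.} Apply \eqref{eq:ifft_channel_reconstruction} and exchange the sums. The signal part becomes
\[
\sum_{q=0}^{M-1}h_{p,q}\,\frac{1}{N}\sum_{k=0}^{N-1}e^{j2\pi k(m-q)/N}=\sum_{q=0}^{M-1}h_{p,q}\,\delta[(m-q)_N].
\]
Because $N\ge M$, for $m,q\in\{0,\ldots,M-1\}$ we have $|m-q|\le M-1<N$. Therefore $(m-q)_N=0$ only when $q=m$, so no aliasing of distinct taps occurs and the sum collapses to $h_{p,m}$. The noise part is, by linearity, exactly \eqref{eq:ofdm_noise_term}. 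This proves \eqref{eq:ofdm_reconstruction_result}.

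\emph{Step 4: interpretation.} Setting $W_p\equiv0$ gives exact recovery. Comparing with \eqref{eq:mf_channel_est}, the effective reconstruction kernel here is the Kronecker delta $\delta[m-q]$ rather than $c_s[m-q]$. Consequently the coupling term $\sum_{q\ne m}h_{p,q}\cdot(\text{kernel})$ vanishes identically: every coefficient with $q\ne m$ is zero. Hence no deterministic contribution from other range bins survives, and the reconstruction is spatial-ISI-free.
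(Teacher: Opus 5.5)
Your proposal is correct and follows the paper's proof essentially step for step. CP removal gives the circular convolution; the $N$-point DFT and ZF division give $\hat H_p[k]=H_p[k]+W_p[k]/S[k]$; and the IDFT yields the periodic delta $\delta_N[m-q]$, which collapses to $q=m$ because $N\ge M$. The only difference is that you verify the linear-to-circular conversion explicitly through the index bound $n-m\ge-(M-1)\ge-L_{\rm CP}$, whereas the paper simply invokes the standard OFDM result.
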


\begin{IEEEproof}
Under \(L_{\rm CP}\ge M-1\), removing the CP transforms the linear convolution between the transmitted CP-OFDM sequence and the \(M\)-tap distributed backscatter channel into the circular convolution in \eqref{eq:circular_conv}. Taking the \(N\)-point DFT gives
\begin{equation}
    U_p[k]=S[k]H_p[k]+W_p[k].
    \label{eq:proof_freq_model}
\end{equation}
Since \(S[k]\neq0\), the ZF estimate satisfies
\begin{equation}
    \hat{H}_p[k]
    =
    \frac{U_p[k]}{S[k]}
    =
    H_p[k]
    +
    \frac{W_p[k]}{S[k]}.
    \label{eq:proof_zf}
\end{equation}
Substituting~\eqref{eq:proof_zf} into~\eqref{eq:ifft_channel_reconstruction} gives
\begin{equation}
    \hat{h}^{\rm OFDM}_{p,m}
   =
    \frac{1}{N}\!\!
    \sum_{k=0}^{N-1}\!\!
    H_p[k]e^{j2\pi km/N}
    \!+\!
    \frac{1}{N}\!\!
    \sum_{k=0}^{N-1}\!\!
    \frac{W_p[k]}{S[k]}e^{j2\pi km/N}.
    \label{eq:proof_ifft}
\end{equation}
Substituting \eqref{eq:channel_freq_response} into the first term of \eqref{eq:proof_ifft}, we have
\begin{align}
    \frac{1}{N}\!\!
    \sum_{k=0}^{N-1}\!\!
    H_p[k]e^{j2\pi km/N}
    &=\! 
    \frac{1}{N}\!\!
    \sum_{k=0}^{N-1}\! 
    \sum_{q=0}^{M-1}\!\!
    h_{p,q}
    e^{-j2\pi kq/N}
    e^{j2\pi km/N}  \label{eq:proof_delta1} \\
    &=\! 
    \sum_{q=0}^{M-1}\! 
    h_{p,q}\!\!
    \left[
    \frac{1}{N}\!\!
    \sum_{k=0}^{N-1}
    e^{j2\pi k(m-q)/N}
    \right].
    \label{eq:proof_delta}
\end{align}
The term in brackets is the periodic Kronecker delta,
\begin{equation}
    \frac{1}{N}
    \sum_{k=0}^{N-1}
    e^{j2\pi k(m-q)/N}
    =
    \delta_N[m-q].
    \label{eq:periodic_delta}
\end{equation}
Because \(N\ge M\) and \(m,q\in\{0,\ldots,M-1\}\), the periodic delta reduces to the ordinary Kronecker delta over the physical range-bin support. Hence,
\begin{equation}
    \sum_{q=0}^{M-1}
    h_{p,q}\delta_N[m-q]
    =
    h_{p,m},
    \label{eq:proof_channel_exact}
\end{equation}
which gives \eqref{eq:ofdm_reconstruction_result}. Unlike the matched-filter-based channel estimation in \eqref{eq:mf_channel_est}, no summation over \(q\neq m\) remains. This proves that the CP-OFDM reconstruction is free from deterministic waveform-induced spatial ISI.
\end{IEEEproof}

\begin{figure}[!t]
    \centering
    {
        \includegraphics[width=0.9\linewidth]{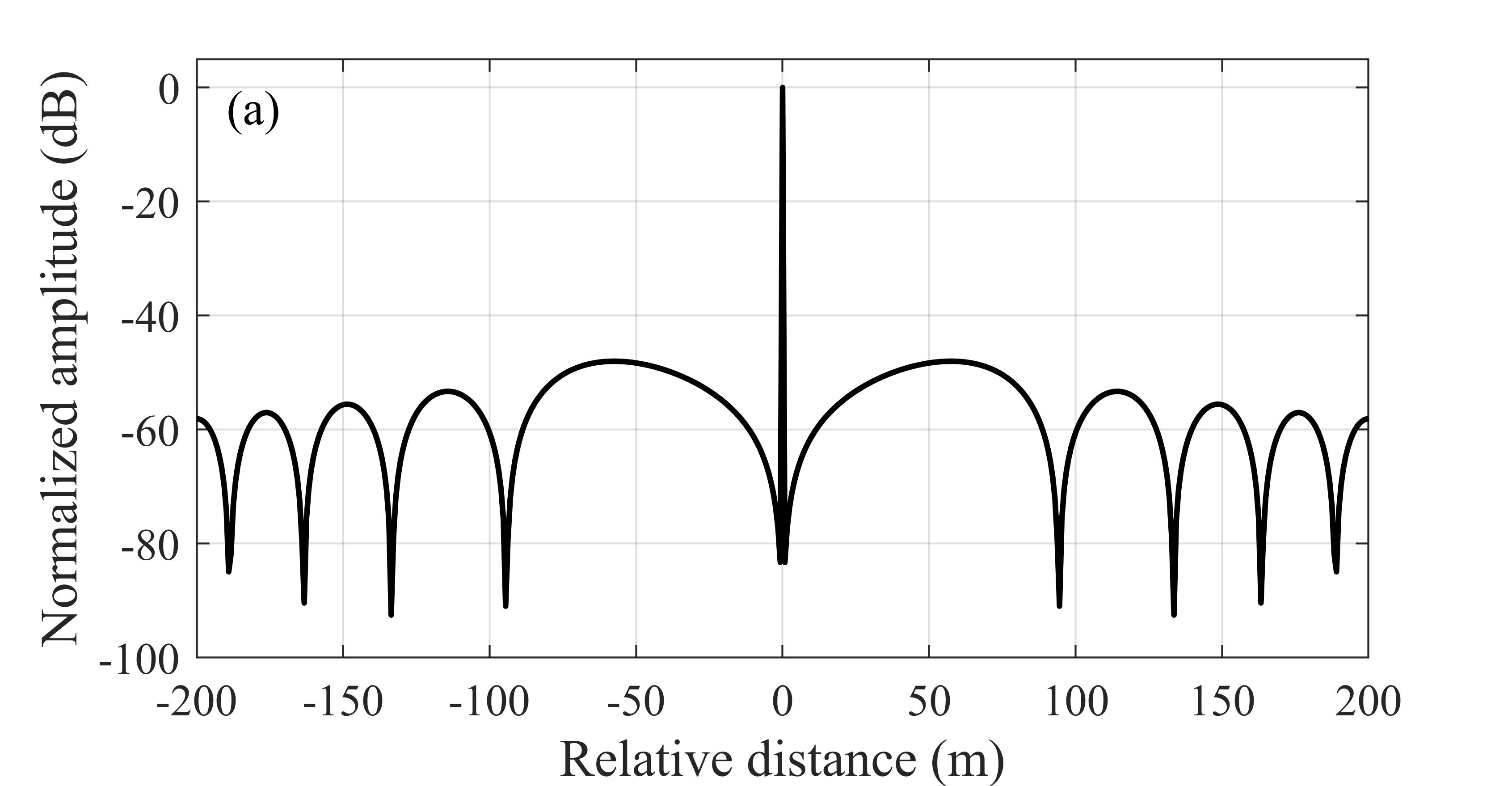}
        \label{fig:psf_lfm}
    }\\ 
    {
        \includegraphics[width=0.9\linewidth]{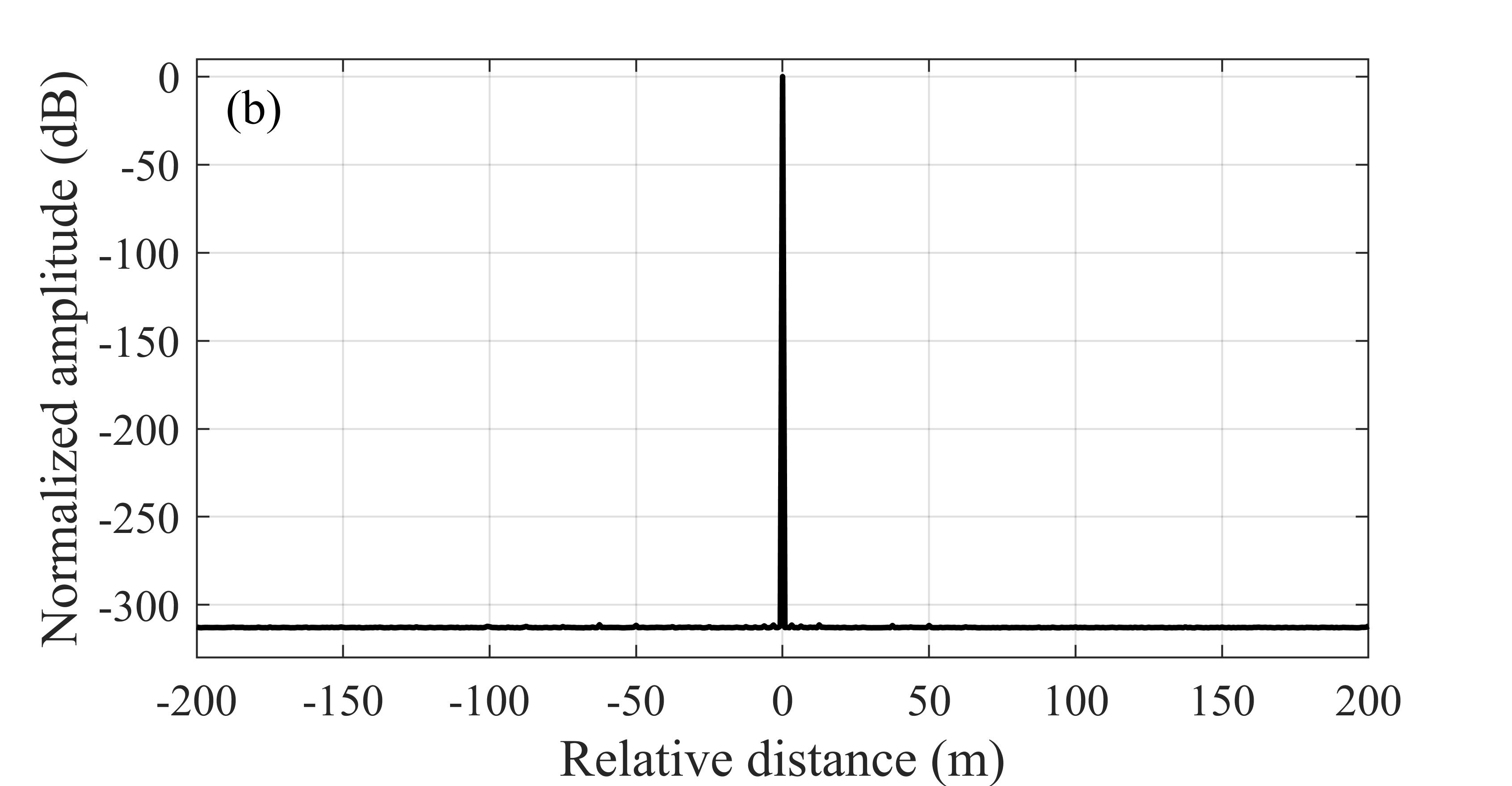}
        \label{fig:psf_cpofdm}
    }
    \caption{Range-bin reconstruction response for a grid-aligned point scatterer using (a) the conventional matched-filter-based LFM pulse-compression receiver and (b) the proposed CP-OFDM-aided frequency-domain channel-reconstruction receiver under the same occupied bandwidth and in the absence of noise.}
    \label{fig:psf_comparison}
\end{figure}

Fig.~\ref{fig:psf_comparison} compares the two reconstruction responses. For LFM probing, the normalized autocorrelation response in~\eqref{eq:mf_channel_est} gives a finite pulse-compression response with nonzero sidelobes even for an isolated range bin, which cause deterministic leakage in dense-event DAS. The CP-OFDM receiver instead reconstructs the backscatter channel through CP removal, frequency-domain equalization, and IDFT.

According to Theorem~\ref{thm:spatial_isi_free}, the on-grid response of the CP-OFDM channel reconstruction is a Kronecker delta. The residual floor in Fig.~\ref{fig:psf_comparison}~(b) arises from numerical precision rather than from the pulse-compression sidelobes in Fig.~\ref{fig:psf_comparison}~(a). Thus, CP-aided frequency-domain channel reconstruction eliminates deterministic waveform-induced leakage among range bins. It does not remove receiver noise, Rayleigh or polarization fading, gauge overlap, or physical overlap between perturbations, which still affect \eqref{eq:ofdm_gauge_product} and \eqref{eq:ofdm_phase_recovery}. 

After the CP-aided frequency-domain channel reconstruction, the gauge product becomes
\begin{equation}
    \hat{g}^{\rm OFDM}_{p,\ell}
    =
    \left(h_{p,\ell+G}+\tilde{w}_{p,\ell+G}\right)
    \left(h_{p,\ell}+\tilde{w}_{p,\ell}\right)^H,
    \label{eq:ofdm_gauge_noise_expansion}
\end{equation}
which contains noise-dependent perturbations, but not the autocorrelation-induced leakage of the form shown in \eqref{eq:mf_error}. Consequently, the recovered slow-time differential phase trace is free of deterministic pulse-compression sidelobes from other fiber range bins.



\subsection{Spatial Resolution Analysis}\label{SpatialResolu}
The matched-filter-based LFM receiver obtains range resolution through pulse compression, whereas the CP-OFDM receiver reconstructs the range bins through frequency-domain equalization followed by an IDFT.
For the LFM receiver, the matched-filter-based channel estimate is given by~\eqref{eq:mf_channel_est}.
Hence, the LFM receiver does not directly recover \(h_{p,\ell}\); instead, it recovers the channel blurred by the autocorrelation kernel \(c_s[\cdot]\). 
The main lobe of \(c_s[\cdot]\) determines the compressed-pulse
spatial resolving interval, while the sidelobes of \(c_s[\cdot]\) produce
deterministic leakage from neighboring range bins, which is the same pulse-compression sidelobe mechanism treated in radar and optical pulse-compression DAS~\cite{RichardsRadar,MompoOL2018}.

To illustrate this dependence, consider the continuous-delay autocorrelation kernel of an LFM probe with a nearly rectangular spectral magnitude across the occupied bandwidth \(B_{\rm occ}\).
After normalization to unit peak, the compression kernel is
\begin{equation}
    c_s(\tau)
    =
    \frac{1}{B_{\rm occ}}
    \int_{-B_{\rm occ}/2}^{B_{\rm occ}/2}
    e^{j2\pi f\tau}\,df
    =
    \operatorname{sinc}(B_{\rm occ}\tau),
    \label{eq:lfm_rect_kernel}
\end{equation}
where \(\operatorname{sinc}(x)=\sin(\pi x)/(\pi x)\). The first nulls occur at $\tau = \pm \frac{1}{B_{\rm occ}}$.
For a backscatter DAS system, a delay interval \(\Delta \tau\) corresponds to a
spatial interval \(v_g\Delta\tau/2\). Therefore, the one-sided first-null spatial interval, measured from the peak of the LFM compressed response, is
\begin{equation}
    \Delta z_{\rm LFM}
    =
    \frac{v_g}{2B_{\rm occ}}.
    \label{eq:lfm_first_null_spatial}
\end{equation}
Note that different time-domain or frequency-domain weightings change the exact main-lobe width and sidelobe level of \(c_s(\tau)\). In particular, windowing can reduce the sidelobes of the pulse-compression response, but it broadens the main lobe and therefore increases the effective resolving interval, as explicitly analyzed for optical pulse-compression reflectometry in~\cite{MompoOL2018}.

The proposed CP-OFDM probing uses frequency-domain channel reconstruction. After CP removal, the useful OFDM block satisfies the circular-convolution model in~\eqref{eq:circular_conv}, whose DFT-domain representation is given by~\eqref{eq:freq_domain_model}. With known nonzero subcarrier weights, the channel frequency samples are obtained by the one-tap operation in~\eqref{eq:zf_channel_est} or~\eqref{eq:mmse_channel_est}.
The spatial-domain channel is then reconstructed by the IDFT in \eqref{eq:ifft_channel_reconstruction}.

Let the active OFDM subcarriers be contiguous and uniformly spaced by $\Delta f= {1}/{T_u}$,
where \(T_u\) is the useful OFDM symbol duration. If \(N_{\rm act}\) active
subcarriers are used, the occupied baseband frequency aperture is
\begin{equation}
    B_{\rm occ}
    =
    N_{\rm act}\Delta f.
    \label{eq:ofdm_frequency_aperture_res}
\end{equation}
The IDFT over this active frequency aperture maps these \(N_{\rm act}\) frequency-domain channel samples onto a periodic delay grid. The grid points are
\begin{equation}
    \tau_m
    =
    \frac{m}{N_{\rm act}\Delta f},
    \label{eq:ofdm_delay_grid}
\end{equation}
where $m=0,1,\ldots,N_{\rm act}-1$. Therefore, the adjacent delay-bin spacing is
\begin{equation}
    \Delta \tau_{\rm OFDM}
    =
    \tau_{m+1}-\tau_m
    =
    \frac{1}{N_{\rm act}\Delta f}
    =
    \frac{1}{B_{\rm occ}}.
    \label{eq:ofdm_delay_bin_res}
\end{equation}
The corresponding spatial bin spacing for backscatter DAS is
\begin{equation}
    \Delta z_{\rm OFDM}
    =
    \frac{v_g}{2}\Delta \tau_{\rm OFDM}
    =
    \frac{v_g}{2N_{\rm act}\Delta f}
    =
    \frac{v_g}{2B_{\rm occ}}.
    \label{eq:ofdm_spatial_bin_res}
\end{equation}

Eq.~\eqref{eq:ofdm_spatial_bin_res} shows that the OFDM delay-bin spacing is determined by the frequency aperture spanned by the estimated channel samples \(H_p[k]\). 
The subcarrier spacing \(\Delta f\) alone does not determine the resolving interval. Instead, \(\Delta f\) determines the periodicity of the IDFT delay representation, i.e., $\tau_{\rm per} = \frac{1}{\Delta f} = T_u$, whereas \(N_{\rm act}\Delta f\) determines the delay-bin spacing.
The DFT range-bin spacing of the CP-OFDM reconstruction is determined by the total frequency aperture used to estimate \(H_p[k]\).
With all active subcarrier weights known and nonzero, Theorem~\ref{thm:spatial_isi_free} shows that a grid-aligned point scatterer is reconstructed as a Kronecker delta on the DFT delay grid as given by~\eqref{eq:ofdm_reconstruction_result}. This delta response is the result of discrete range-bin reconstruction and should not be interpreted as time-domain compression of the OFDM block.

To clarify the finite-bandwidth nature of the reconstruction, consider a
single scatterer with a continuous delay \(\tau_0\). Its frequency response
over the active subcarriers is
\begin{equation}
    H[k]
    =
    \alpha e^{-j2\pi k\Delta f\tau_0},
    \label{eq:single_scatterer_freq_res}
\end{equation}
where $ k=0,1,\ldots,N_{\rm act}-1$.
After one-tap channel estimation and IDFT reconstruction, the channel response at
the \(m\)-th delay bin is
\begin{equation}
    \hat h[m]
    =
    \frac{\alpha}{N_{\rm act}}
    \sum_{k=0}^{N_{\rm act}-1}
    e^{j2\pi k\left(\frac{m}{N_{\rm act}}-\Delta f\tau_0\right)}.
    \label{eq:single_scatterer_idft_res}
\end{equation}
Let
\begin{equation}
    \epsilon_m
    =
    \frac{m}{N_{\rm act}}-\Delta f\tau_0.
    \label{eq:ofdm_epsilon_res}
\end{equation}
Then
\begin{equation}
    \hat h[m]
    =
    \alpha
    e^{j\pi(N_{\rm act}-1)\epsilon_m}
    \frac{
    \sin(\pi N_{\rm act}\epsilon_m)
    }{
    N_{\rm act}\sin(\pi\epsilon_m)
    }.
    \label{eq:ofdm_dirichlet_res}
\end{equation}
Therefore, an off-grid scatterer is represented by a Dirichlet response over the DFT delay grid. Its first null occurs when the relative delay offset from a DFT grid point equals \(1/(N_{\rm act}\Delta f)=1/B_{\rm occ}\), consistent with \eqref{eq:ofdm_delay_bin_res} and~\eqref{eq:ofdm_spatial_bin_res}. Accordingly, the proposed CP-OFDM probing does not provide super-resolution beyond the bandwidth-limited interval. In fact, it reconstructs the channel on the DFT range-bin grid and eliminates the autocorrelation-induced spatial ISI in~\eqref{eq:mf_channel_est}.

In summary, matched-filter-based LFM pulse compression resolves distance through the autocorrelation kernel \(c_s[\cdot]\), whereas our CP-OFDM reconstructs the spatial channel from the estimated \(H_p[k]\). Under the same occupied bandwidth, both have the same bandwidth-limited resolving interval; CP-OFDM removes the deterministic matched-filter-induced coupling in \eqref{eq:mf_channel_est} on the DFT range-bin grid.

\section{Simulations and Discussion}\label{Simu}
To validate the spatial-ISI-free channel reconstruction of the proposed CP-OFDM probing, a $5.2$-km $\phi$-OTDR DAS link was simulated using the discrete Rayleigh-backscatter model in \eqref{eq:discrete_backscatter}. 
The effective refractive index was set to $n_{\rm eff}=1.5$, corresponding to $v_g=2.0\times10^8$~m/s. The amplitude attenuation coefficient was $\alpha=2.3\times10^{-5}$~m$^{-1}$, and the complex receiver noise was modeled as additive white Gaussian noise. 
Both the matched-filter-based LFM and CP-OFDM branches used an effective complex-baseband bandwidth of $128$~MHz and a sampling rate of $128$~MSa/s. The resulting range-bin spacing was $\Delta z=v_g/(2F_s)=0.781$~m. The gauge separation in \eqref{eq:mf_gauge_product} and \eqref{eq:ofdm_gauge_product} was set to $G=3$ samples, i.e., $2.344$~m.
 
The conventional branch used an energy-normalized LFM pulse, the matched-filter-based channel estimation in \eqref{eq:mf_channel_est}, and the gauge-differential phase recovery in \eqref{eq:mf_phase_recovery}. The proposed branch used CP-OFDM probing with $N=8192$ active QPSK-weighted subcarriers and $L_{\rm CP}=6655$ CP samples. The simulated fiber contained $M=6656$ range bins; hence, the conditions $N\ge M$ and $L_{\rm CP}\ge M-1$ in Theorem~\ref{thm:spatial_isi_free} were satisfied. After CP removal, the proposed CP-OFDM-aided frequency-domain channel-reconstruction receiver obtained the one-tap estimate in \eqref{eq:zf_channel_est} and reconstructed the channel using \eqref{eq:ifft_channel_reconstruction}.
Consequently, the proposed branch contained no deterministic spatial-ISI of the form shown in \eqref{eq:mf_channel_est}. The complete CP-OFDM block duration was $115.99$~$\mu$s, and the fiber round-trip time was $52.00$~$\mu$s. The probing repetition period was $201.59$~$\mu$s, yielding a slow-time sampling rate of $4.96$~kHz. A total of $150$ probing periods were simulated, corresponding to a monitoring interval of approximately $30$~ms.

To stress the spatial-ISI term in \eqref{eq:mf_gauge_expansion}, ten simultaneous events were arranged into groups of $3$, $4$, and $3$ near $1.05$, $3.00$, and $4.50$~km. Their intragroup spacings were $5.47$, $5.83$, and $5.31$~m, respectively. Four strong and six weak events had amplitudes of $0.12$--$1.00$~rad and frequencies of $279.8$--$410.4$~Hz around group centers of $280$, $340$, and $410$~Hz. This setting exposed matched-filter sidelobe leakage without dominant gauge overlap.

\begin{figure}[!t]
    \centering
    {
        \includegraphics[width=0.94\linewidth]{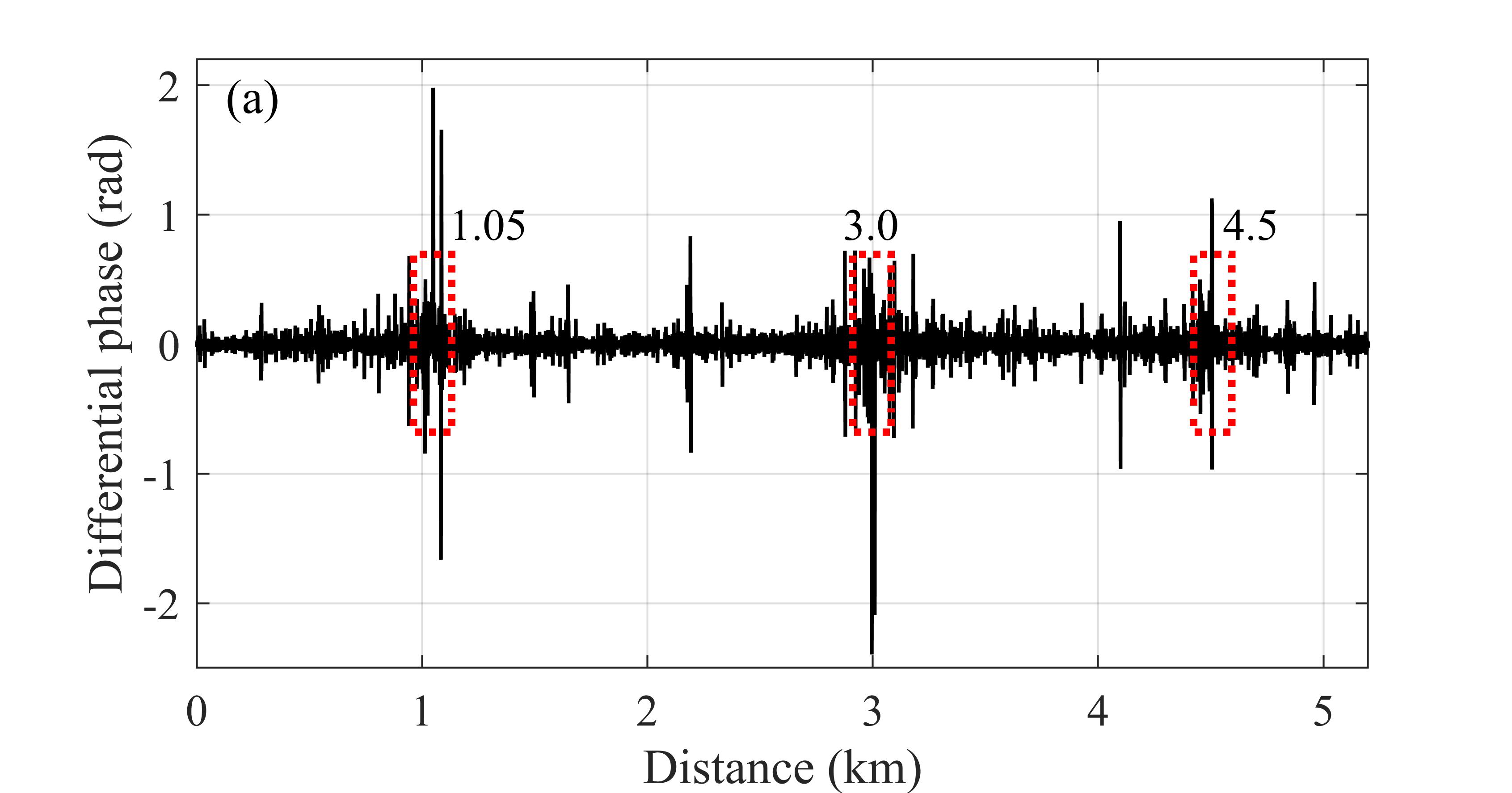}
        \label{fig:Gphi_lfm}
    }\\ 
    {
        \includegraphics[width=0.94\linewidth]{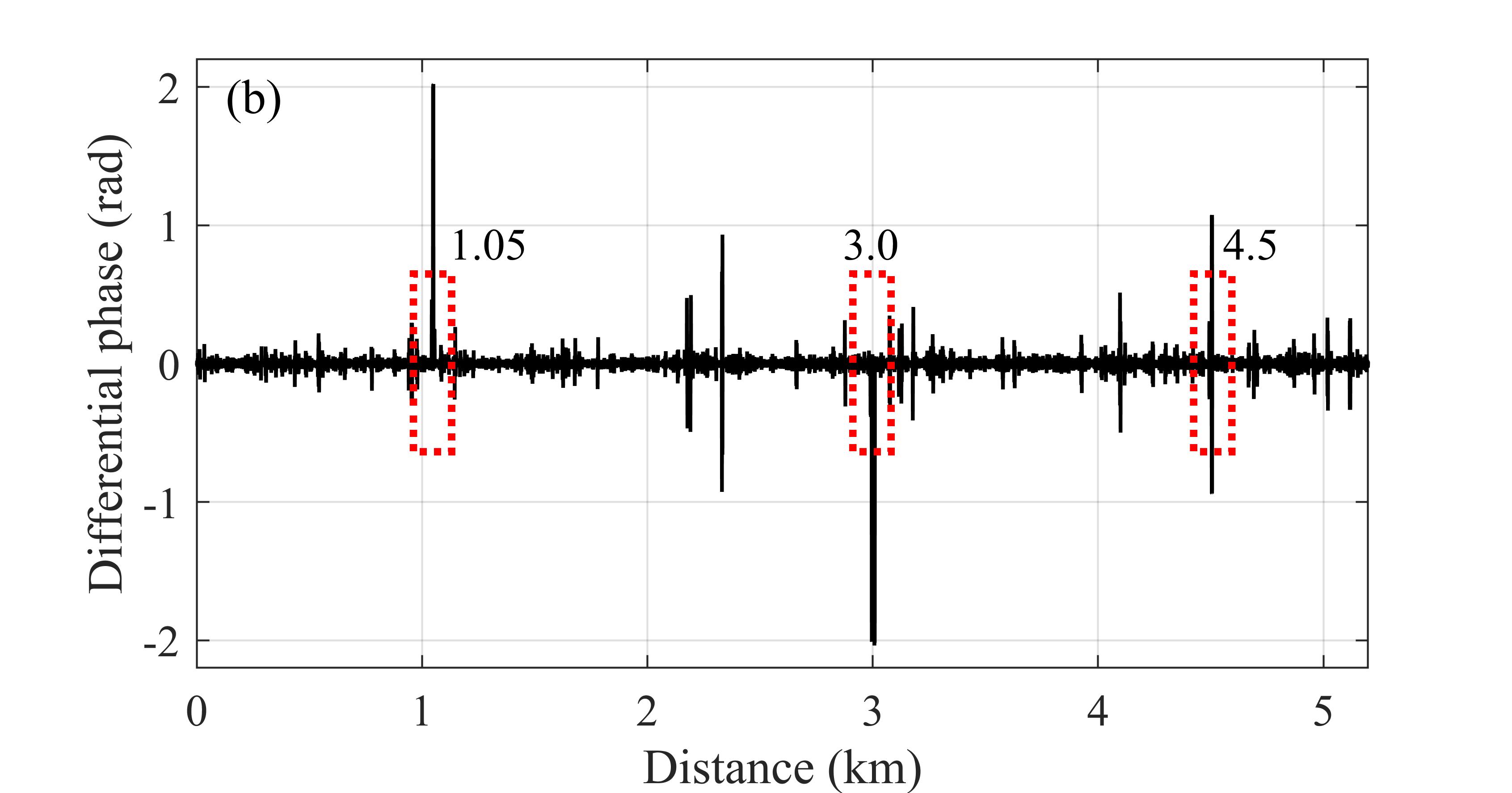}
        \label{fig:Gphi_cpofdm}
    }
    \caption{Gauge-differential phase traces over the whole sensing fiber and all $150$ probing periods using (a) the LFM matched-filter receiver and (b) the proposed CP-OFDM-aided frequency-domain channel-reconstruction receiver.}
    \label{fig:Gphi_comparison}
\end{figure}

Fig.~\ref{fig:Gphi_comparison} shows that the matched-filter-based LFM receiver spreads the phase responses spatially and produces visible off-event fluctuations, particularly near the ten events. This agrees with \eqref{eq:mf_channel_est}, where \(c_s[\ell-m]\) couples neighboring bins before the nonlinear phase extraction in \eqref{eq:mf_phase_recovery}. The CP-OFDM receiver instead confines the responses to the preset event groups and suppresses the off-event floor, consistent with Theorem~\ref{thm:spatial_isi_free}. Its residual fluctuations are noise-dependent rather than deterministic pulse-compression sidelobes.

\begin{figure}[!t]
    \centering
    \includegraphics[width=0.42\textwidth]{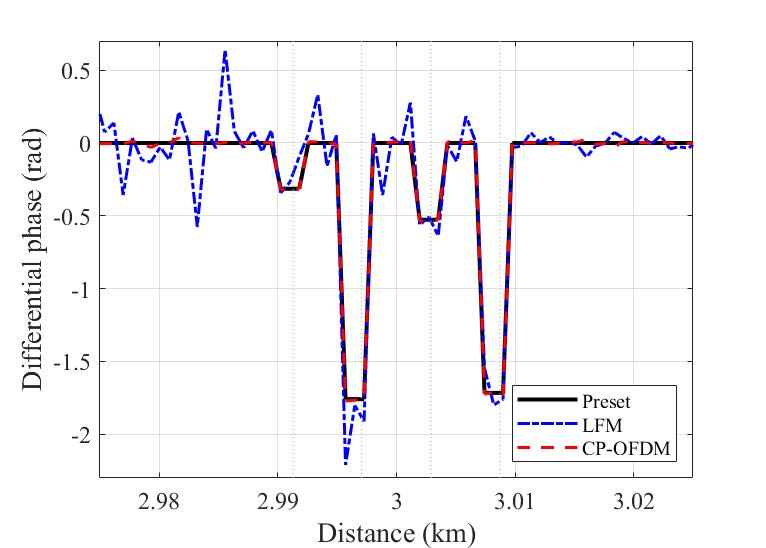}
    \caption{Gauge-differential phase snapshot at one probing period for the four closely spaced events around $3$~km, comparing the preset phase, the matched-filter-based LFM receiver, and the proposed CP-OFDM-aided frequency-domain channel-reconstruction receiver.}
    \label{fig:oneperiod}
\end{figure}

\begin{figure*}[!t]
    \centering
    \includegraphics[width=0.96\textwidth]{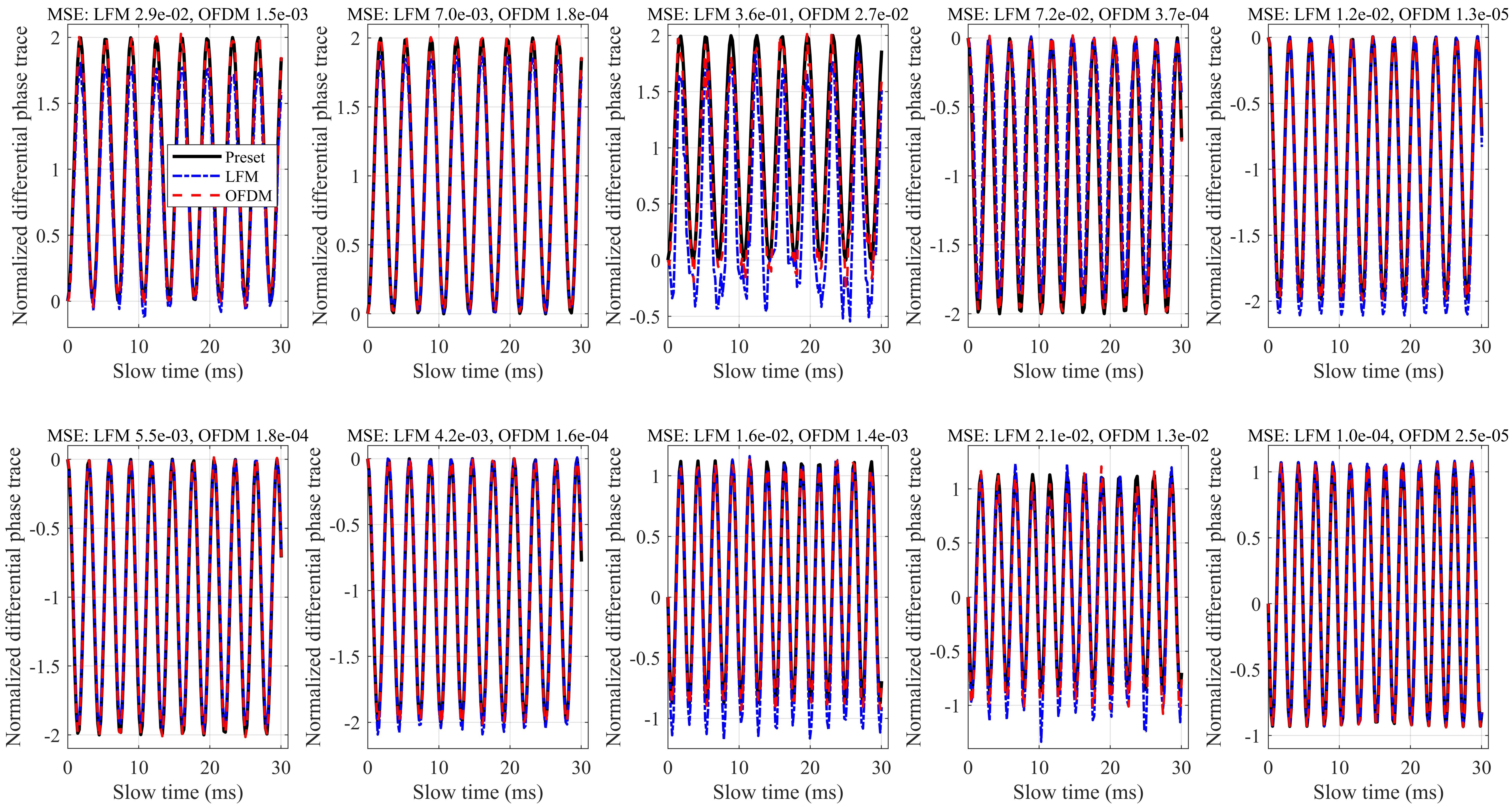}
    \caption{Normalized slow-time phase traces at the ten event positions, comparing the preset phase trace, the matched-filter-based LFM receiver, and the proposed CP-OFDM-aided frequency-domain channel-reconstruction receiver, where the preset phase trace for each event is normalized to a peak-to-peak value of $2$, and the LFM and CP-OFDM traces use the same scale factor.}
    \label{fig:recovered_phase}
\end{figure*}

Fig.~\ref{fig:oneperiod} examines four events separated by $5.83$~m near $3$~km at the 83rd probing period. The LFM result deviates from the preset trace because \(h_{p,m}c_s[\ell-m]\) leaks strong-event responses into neighboring weak-event bins. The CP-OFDM result more closely follows the preset trace because \eqref{eq:ofdm_reconstruction_result} contains no summation over \(q\ne m\), confirming reduced strong-to-weak leakage (i.e., the waveform-induced spatial ISI) at the same bandwidth-limited range-bin spacing.

Fig.~\ref{fig:recovered_phase} compares the slow-time phase traces recovered by \eqref{eq:mf_phase_recovery} and \eqref{eq:ofdm_phase_recovery}. Each recovered trace is plotted using the scale factor determined from the corresponding preset trace, preserving amplitude bias and waveform distortion.
For the $n$-th event, the normalized mean-square error (MSE) annotated in the corresponding subplot is calculated from the same preset-normalized traces:
\begin{equation*}
{\rm MSE}_{n}^{(\cdot)}
=\frac{1}{P}\sum_{p=0}^{P-1}
\left(
\Delta\hat{\phi}_{p,\ell}^{(\cdot)}
-\Delta\phi_{p,\ell}^{\rm pre}
\right)^2,
\end{equation*}
where $n=1,\ldots,10$, \(P=150\), \((\cdot)\) denotes either the matched-filter-based LFM or proposed CP-OFDM receiver, and \(\Delta\phi_{p,\ell}^{\rm pre}\) is the preset phase trace after the same peak-to-peak normalization. 

The gain \(10\log_{10}({\rm MSE}_{n}^{\rm MF}/{\rm MSE}_{n}^{\rm OFDM})\) is positive when CP-OFDM has a smaller MSE. The ten gains are $12.74$, $15.92$, $11.31$, $22.91$, $29.55$, $14.87$, $14.11$, $10.67$, $1.94$, and $6.09$~dB for the ten events, respectively. Thus, CP-OFDM improves every event, particularly weak events near strong perturbations; even E9 retains a lower MSE. This agrees with \eqref{eq:mf_gauge_expansion} and \eqref{eq:ofdm_gauge_noise_expansion}. The remaining deviations arise mainly from receiver noise, Rayleigh fading, and finite gauge length rather than pulse-compression sidelobes.

\section{Experimental Verification}\label{Exp}
This section experimentally verifies the practical feasibility of the proposed CP-OFDM DAS. 
The receiver reconstructs the distributed Rayleigh-backscatter channel through frequency-domain equalization and then recovers localized vibrations from the gauge-differential phase. 
Because the frequency-domain coefficients of the CP-OFDM probe are standard communication symbols, the same waveform can also support ISAC. 
Fig.~\ref{fig:Expsch} illustrates this shared-waveform operation. 
\begin{figure}[!t]
    \centering
    \includegraphics[width=0.42\textwidth]{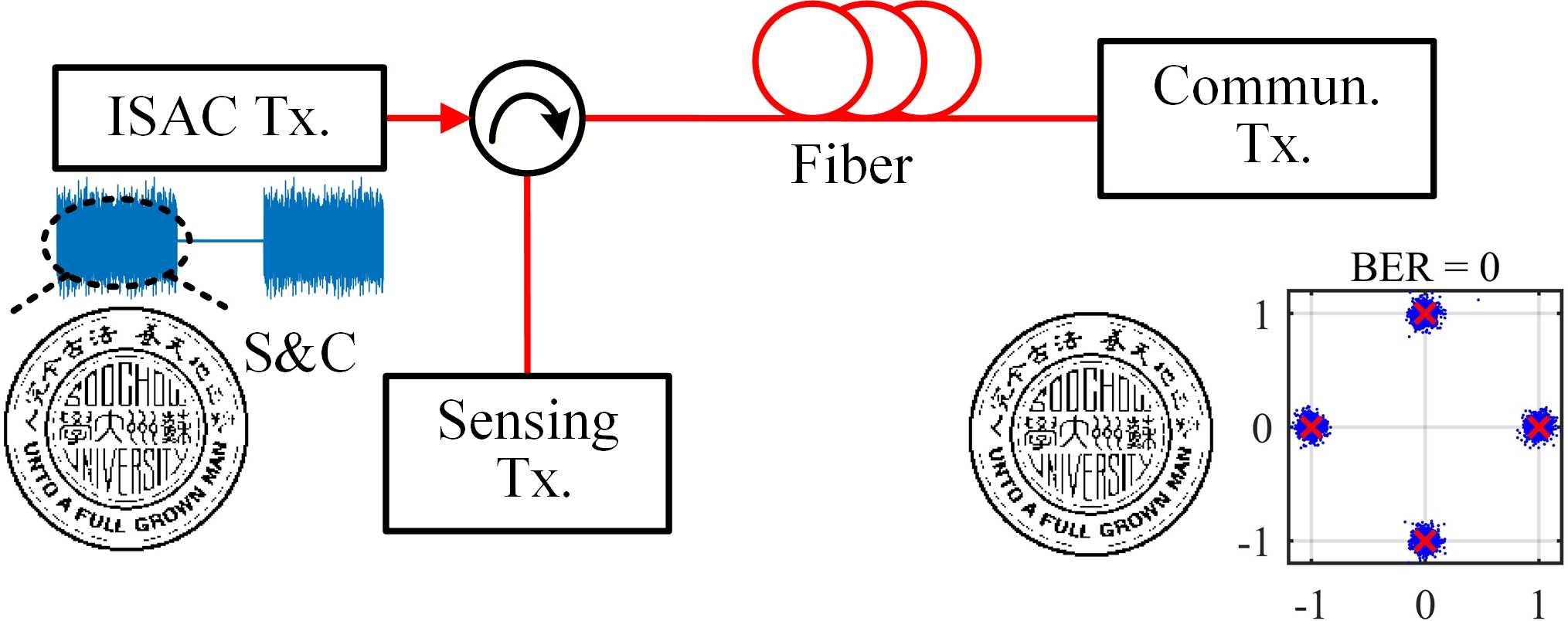}
    \caption{Illustration of shared-waveform integrated sensing and communication (ISAC) using the proposed CP-OFDM waveform for both sensing and communication (S\&C).}
    \label{fig:Expsch}
\end{figure}

Instead of using a specific sensing sequence, such as a pseudorandom binary sequence (PRBS), the input image is quantized, serialized, scrambled, and mapped onto the OFDM subcarriers as QPSK symbols. These data-bearing symbols form the known vector \(S[k]\) used to synthesize the CP-OFDM probe. At the forward communication terminal, CP removal, a fast Fourier transform (FFT), channel equalization, and QPSK decisions recover the payload.  
As shown in Fig.~\ref{fig:Expsch}, the image is recovered with zero measured bit-error rate (BER), and independent validation frames yield a median error vector magnitude (EVM) of \(-23.14\)~dB.

At the sensing receiver, the same \(S[k]\) is directly reused in the one-tap channel reconstruction \(Y_p[k]/S[k]\) in~\eqref{eq:zf_channel_est}. 
Therefore, embedding communication data neither alters the CP-aided circular-convolution model nor requires an additional sensing waveform. 
As illustrated in Fig.~\ref{fig:Expsch}, the same optical CP-OFDM signal simultaneously carries communication information and probes the distributed Rayleigh-backscatter channel. 
As this experiment serves as an initial proof of concept, the following results focus on DAS localization and vibration recovery rather than a complete characterization of the communication performance.

\begin{figure}[!t]
    \centering
    \includegraphics[width=0.45\textwidth]{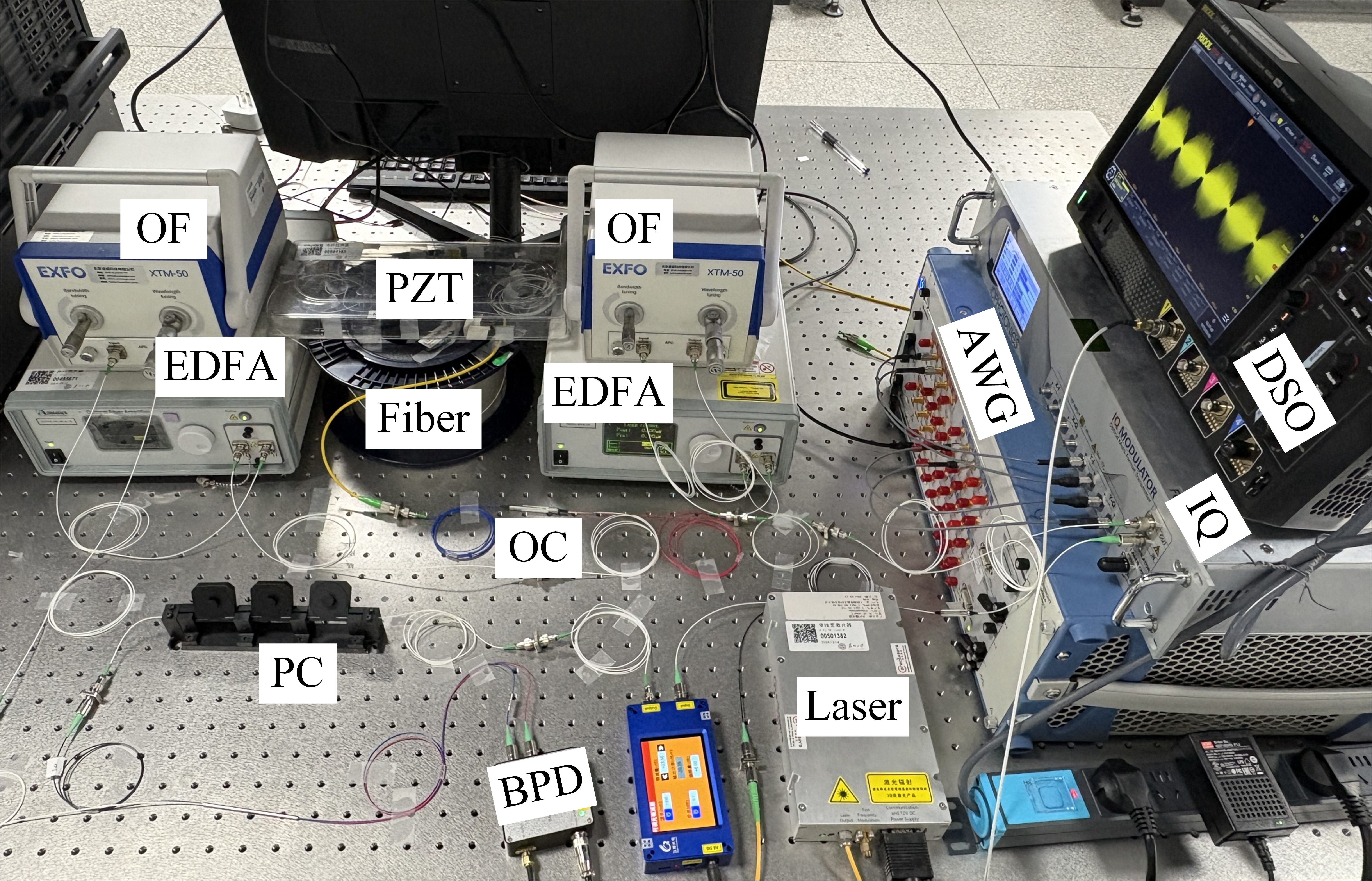}
    \caption{Experimental setup of the CP-OFDM DAS system. AWG: arbitrary waveform generator; OF: optical filter; EDFA: erbium-doped fiber amplifier; PZT: piezoelectric transducer; PC: polarization controller; BPD: balanced photodetector; DSO: digital storage oscilloscope.}
    \label{fig:Exp}
\end{figure}

The CP-OFDM heterodyne coherent DAS experimental setup is shown in Fig.~\ref{fig:Exp}. The output of a narrow-linewidth Precilasers FL-SF-1550-S laser is divided into signal and LO branches. In the signal branch, a Keysight M8190A arbitrary waveform generator (AWG) generates the in-phase and quadrature electrical drives for an ID PHOTONICS OMFTV2 OMFT-C-00FA IQ modulator. The CP-OFDM waveform is generated on a \(128\)-MSa/s complex-baseband grid, digitally shifted to a \(160\)-MHz intermediate frequency, and exported by the AWG at \(512\)~MSa/s. The useful OFDM symbol contains \(N=8192\) samples and \(7167\) active QPSK subcarriers. The subcarrier spacing is \(15.625\)~kHz, and the occupied bandwidth is \(111.984\)~MHz, with guard subcarriers at both band edges and a nulled direct-current subcarrier.

The cyclic prefix contains \(7167\) samples, corresponding to \(55.992~\mu\)s. This duration covers the approximately \(52~\mu\)s round-trip delay of the \(5.2\)-km fiber and retains an additional \(4~\mu\)s timing margin, thereby satisfying the sufficient-CP condition used in Theorem~\ref{thm:spatial_isi_free}. The useful-symbol duration is \(64~\mu\)s, and the complete CP-OFDM block duration is \(119.992~\mu\)s. A repetition period of \(189.281~\mu\)s gives a slow-time sampling rate of \(5.283\)~kHz and a Nyquist frequency of \(2.642\)~kHz.
After IQ modulation, the optical signal is spectrally conditioned by the first EXFO XTM-50 optical filter (OF) and amplified by an Amonics AEDFA-23-B-FA erbium-doped fiber amplifier (EDFA). The launched power is adjusted to approximately \(-11\)~dBm before the signal enters the OC and the \(5.2\)-km G.652.D fiber. A Halliburton PZ1-SMF4-APC-E piezoelectric transducer (PZT) is connected at approximately \(5.1\)~km and driven by a \(500\)-Hz sinusoidal voltage. Two records are acquired with PZT drive voltages of \(5\) and \(1\)~V, representing strong- and weak-vibration conditions, respectively.

The Rayleigh-backscatter light returns through the OC, is filtered by the second EXFO XTM-50 OF, and is amplified by an Amonics AEDFA-PA-30-B-FA EDFA. The returned-signal power at the heterodyne coherent receiver is approximately \(-23\)~dBm. In the LO branch, the optical power is set to approximately \(-20\)~dBm, and a polarization controller (PC) aligns the LO polarization with that of the returned signal. The two optical fields are mixed by an ULTRALOWNOISE BALANCEPHOTODETECTOR-300M-A balanced photodetector (BPD). Its electrical output is sampled by a RIGOL DHO4404 digital storage oscilloscope (DSO) at \(1\)~GSa/s. Each \(50\)-million-sample record covers approximately \(50\)~ms, or about \(264\) CP-OFDM probing periods.

The acquired heterodyne BPD waveform is processed offline. A \(94\)--\(226\)-MHz radio-frequency bandpass filter first isolates the heterodyne CP-OFDM band. After \(160\)-MHz digital downconversion, the signal is resampled to \(128\)~MSa/s and synchronized using the CP and the transmitted zero-guard edge. The CP is then removed, and the received subcarriers are divided by the known transmitted subcarriers to implement the frequency-domain channel reconstruction in~\eqref{eq:zf_channel_est}. An inverse FFT reconstructs the distributed Rayleigh-backscatter channel. 
To mitigate coherent Rayleigh fading, frequency-diversity rotated vector summation (RVS) is applied after frequency-domain channel reconstruction. The occupied OFDM band is partitioned into \(32\) overlapping raised-cosine subbands; in the implementation, the full-band branch is also retained as an additional diversity branch. For the \(q\)-th branch, the subband channel and gauge product are obtained as follows: 
\begin{align}
    \hat{h}_{p,m}^{(q)}
    &=
    {\rm IDFT}\!\left\{
    B_q[k]\hat{H}_p[k]
    \right\}, \label{eq:exp_subband_gauge} \\
    \quad
    g_{p,\ell}^{(q)}
    & =
    \hat{h}_{p,\ell+G}^{(q)}
    \left(\hat{h}_{p,\ell}^{(q)}\right)^H,
    \label{eq:exp_subband_gauge1}
\end{align}
where \(B_q[k]\) is the subband mask. The first received frame is used as the RVS phase reference for each branch. The combined gauge product is
\begin{equation}
    g_{p,\ell}^{\rm RVS}
    =
    \frac{
    \sum_q w_{p,\ell}^{(q)}
    g_{p,\ell}^{(q)}
    e^{-j\angle g_{0,\ell}^{(q)}}
    }{
    \sum_q w_{p,\ell}^{(q)}+\epsilon
    },
    \label{eq:exp_rvs_combining}
\end{equation}
where $w_{p,\ell}^{(q)}=\left|g_{p,\ell}^{(q)}\right|^2$, and \(\epsilon\) is a small positive constant used only for numerical regularization. Note that the above RVS is applied only after the CP-OFDM channel reconstruction for coherent-fading mitigation. It does not introduce matched-filter pulse compression or alter the CP-aided channel reconstruction.
The slow-time differential phase is then extracted from \(g_{p,\ell}^{\rm RVS}\). 
The spatial sampling interval is \(0.78125\)~m, and a \(10\)-sample gauge separation gives an effective gauge length of \(7.8125\)~m. 
The localization and recovery algorithms do not receive any prior information about the vibration position or frequency; the blind search range is \(300\)--\(1400\)~Hz.

\begin{figure}[!t]
    \centering
    \includegraphics[width=0.39\textwidth]{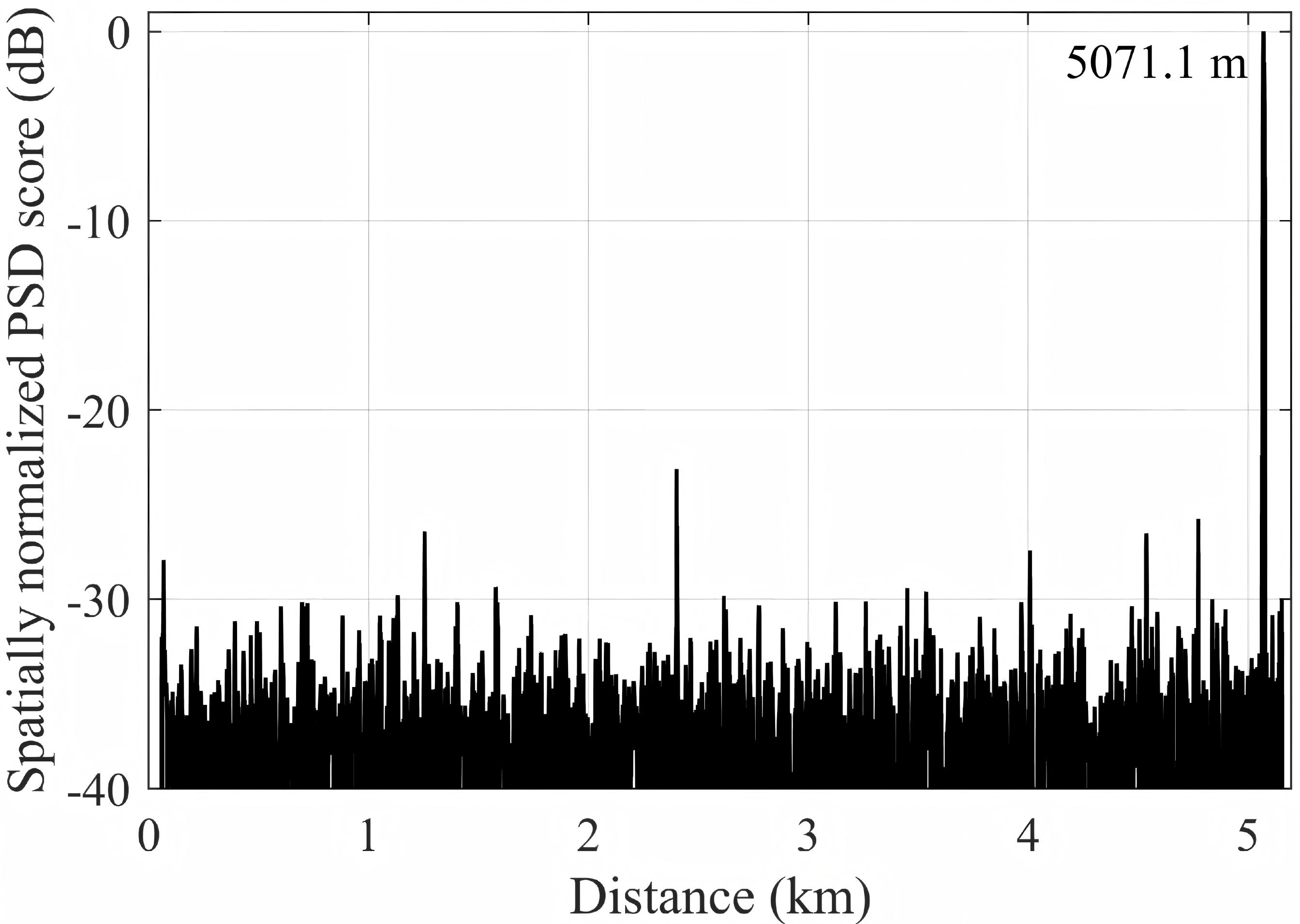}
    \caption{Blind vibration localization using the proposed CP-OFDM DAS with a \(5\)-V, \(500\)-Hz PZT drive, where the spatially whitened power spectral density (PSD) score exhibits its dominant peak at \(5071.1\)~m.}
    \label{fig:ExpLocdB5v}
\end{figure}

First, Fig.~\ref{fig:ExpLocdB5v} shows the blind spatial-localization result for the record acquired with the \(5\)-V PZT drive. At each gauge position, the slow-time differential phase is detrended, high-pass filtered above \(200\)~Hz, windowed, and transformed into the frequency domain. Let \(P_\ell(f)\) denote the resulting slow-time power spectral density (PSD) at gauge position \(\ell\). To suppress the range-dependent background, the PSD is spatially whitened at each frequency as
\begin{equation}
    \bar{P}_\ell(f)
    =
    \frac{P_\ell(f)}
    {{\rm median}_{\ell\in\mathcal{V}}\{P_\ell(f)\}+\epsilon},
    \label{eq:exp_psd_whitening}
\end{equation}
where \(\mathcal{V}\) is the set of valid gauge positions after amplitude gating and edge exclusion, and \(\epsilon\) is the same small positive regularization constant. The blind localization score is then defined as follows:
\begin{align}
    \Gamma_\ell
    &=
    \max_{f\in[300,1400]\,{\rm Hz}}
    \bar{P}_\ell(f), \notag\\
    \Gamma_{\ell,{\rm dB}}
    &=
    10\log_{10}\Gamma_\ell
    -
    \max_\ell 10\log_{10}\Gamma_\ell .
    \label{eq:exp_blind_score}
\end{align}
This score is therefore a data-driven, spatially whitened PSD score rather than a metric using prior knowledge of the PZT position or drive frequency. Before plotting, the score is smoothed over a \(3\)-m spatial window. The resulting score has a unique dominant peak at \(5071.1\)~m. Most non-event positions remain at least approximately \(30\)~dB below this peak, and the strongest isolated residual features remain more than \(20\)~dB below it. The detected position is consistent with the PZT installed near \(5.1\)~km. Note that the remaining offset is attributed to the uncalibrated absolute range origin, fiber-pigtail uncertainty, finite gauge length, and noise rather than ambiguity in the vibration peak.

\begin{figure}[!t]
    \centering
    \includegraphics[width=0.4\textwidth]{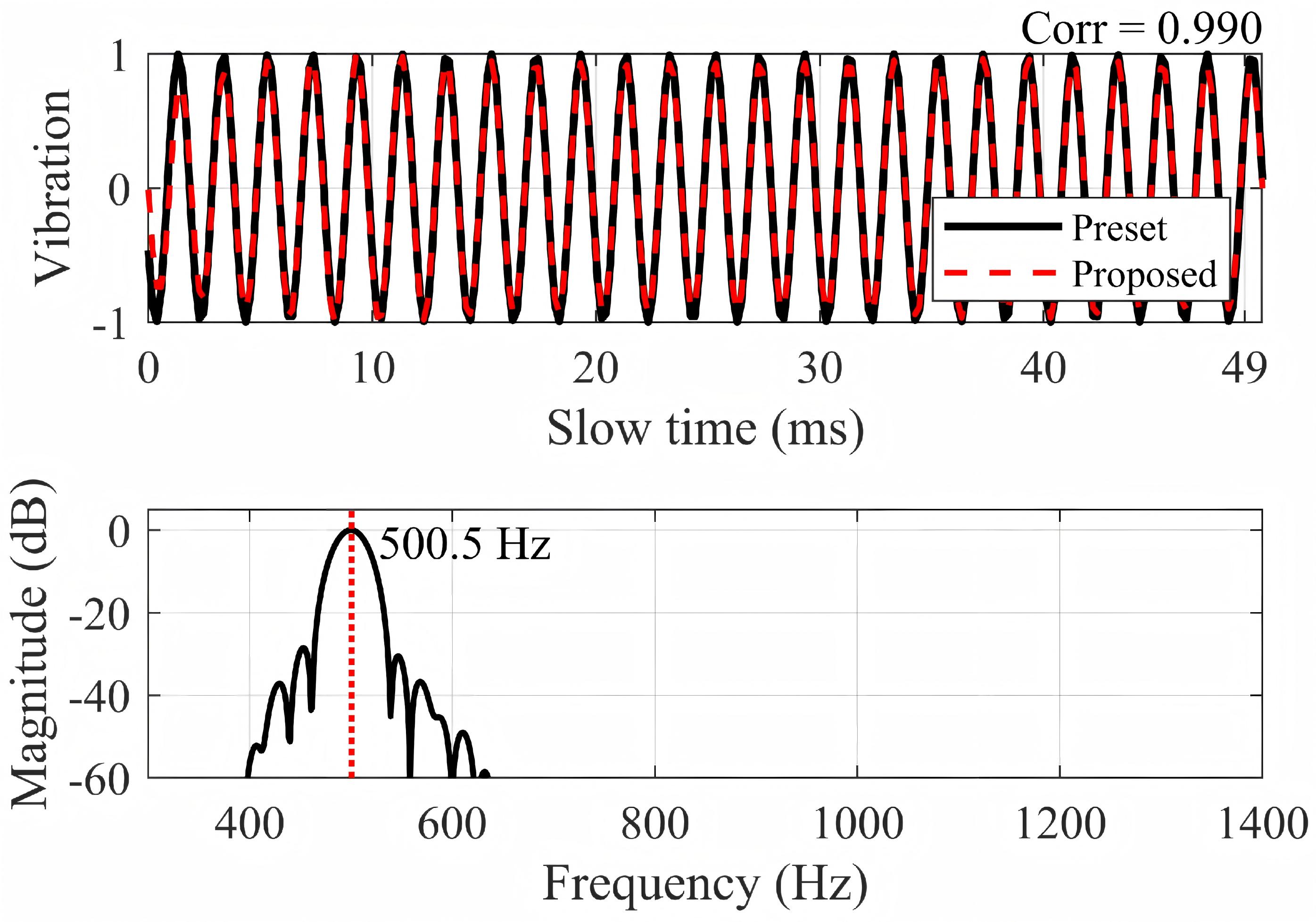}
    \caption{Recovered vibration waveform and its corresponding spectrum under the \(5\)-V PZT drive, where the upper panel shows normalized recovered DAS trace and the preset \(500\)-Hz sinusoidal drive, with a correlation coefficient of \(0.990\) and the lower panel illustrates normalized slow-time spectrum with the detected peak at \(500.5\)~Hz.}
    \label{fig:ExpVibFre5v}
\end{figure}

The corresponding vibration waveform reconstruction is shown in Fig.~\ref{fig:ExpVibFre5v}. The differential phase at the detected range bin is band-limited around the data-derived dominant frequency and normalized only for comparison. Over the approximately \(49\)-ms displayed interval, the recovered trace follows the preset sinusoidal vibration for about \(25\) cycles and achieves a correlation coefficient of \(0.990\). The slow-time spectrum exhibits a distinct maximum at \(500.5\)~Hz, differing from the nominal \(500\)-Hz PZT drive by only \(0.5\)~Hz. These results demonstrate that the proposed CP-OFDM-aided frequency-domain channel-reconstruction receiver preserves the temporal phase evolution of the vibration after distributed channel reconstruction, rather than merely detecting excess energy at the PZT position.

\begin{figure}[!t]
    \centering
    \includegraphics[width=0.39\textwidth]{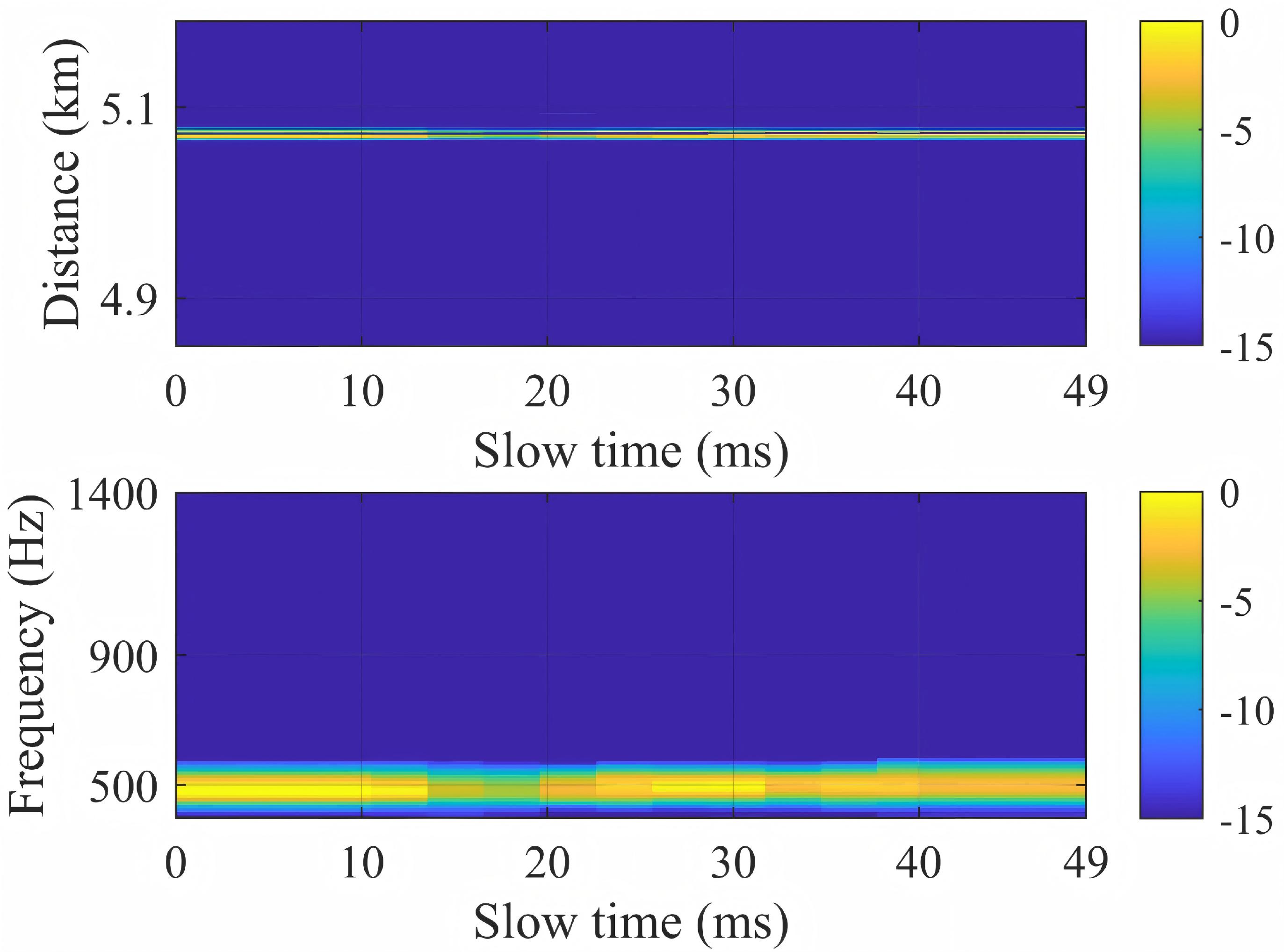}
    \caption{Blind slow-time tracking for the \(5\)-V PZT drive, where the upper panel shows the normalized distance--slow-time activity with a persistent response near \(5.07\)~km and the lower panel illustrates the normalized frequency--slow-time activity with a stable component near \(500\)~Hz throughout the \(49\)-ms record, respectively.}
    \label{fig:ExpLocFre5v}
\end{figure}

Fig.~\ref{fig:ExpLocFre5v} further evaluates whether the vibration event remains localized over slow time. The tracking algorithm uses \(96\)-frame sliding windows with \(16\)-frame hops, corresponding to approximately \(18.17\)-ms and \(3.03\)-ms intervals, respectively. The distance--slow-time map in the upper panel contains a continuous high-score ridge near \(5.07\)~km. This indicates that the recovered acoustic activity remains confined to the PZT region throughout the record. The frequency--slow-time map in the lower panel simultaneously retains a dominant ridge around \(500\)~Hz. Hence, the CP-OFDM channel estimation enables consistent range localization and time-frequency tracking over successive probing periods, which is the required two-dimensional DAS behavior in both fast time and slow time.

\begin{figure}[!t]
    \centering
    \includegraphics[width=0.39\textwidth]{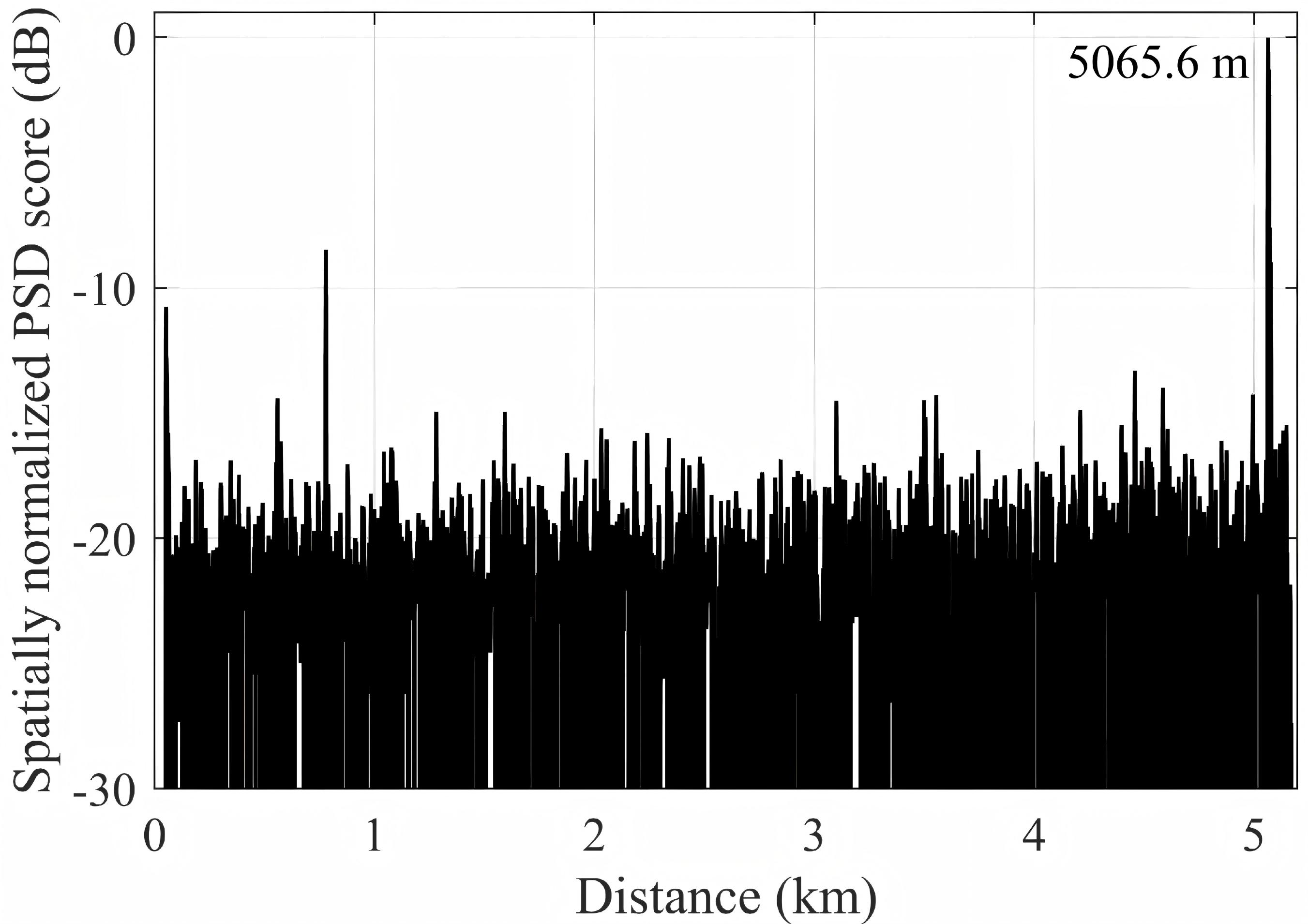}
    \caption{Blind vibration localization using the proposed CP-OFDM DAS with a \(1\)-V, \(500\)-Hz PZT drive, where the spatially whitened PSD score exhibits its dominant peak at \(5065.6\)~m.}
    \label{fig:ExpLocdB1v}
\end{figure}

Figs.~\ref{fig:ExpLocdB1v}--\ref{fig:ExpLocFre1v} repeat the same processing for the record acquired with the \(1\)-V PZT drive, without changing the receiver parameters or introducing the known PZT position and frequency. As shown in Fig.~\ref{fig:ExpLocdB1v}, reducing the PZT drive voltage by a factor of five raises the relative background level of the spatially whitened PSD score and decreases the localization contrast. Nevertheless, the largest score still occurs at \(5065.6\)~m. The position estimates obtained with the \(5\)- and \(1\)-V PZT drives differ by only \(5.5\)~m, which is smaller than the \(7.8125\)-m gauge length. Both estimates identify the same physical perturbation region near the end of the sensing fiber.

\begin{figure}[!t]
    \centering
    \includegraphics[width=0.4\textwidth]{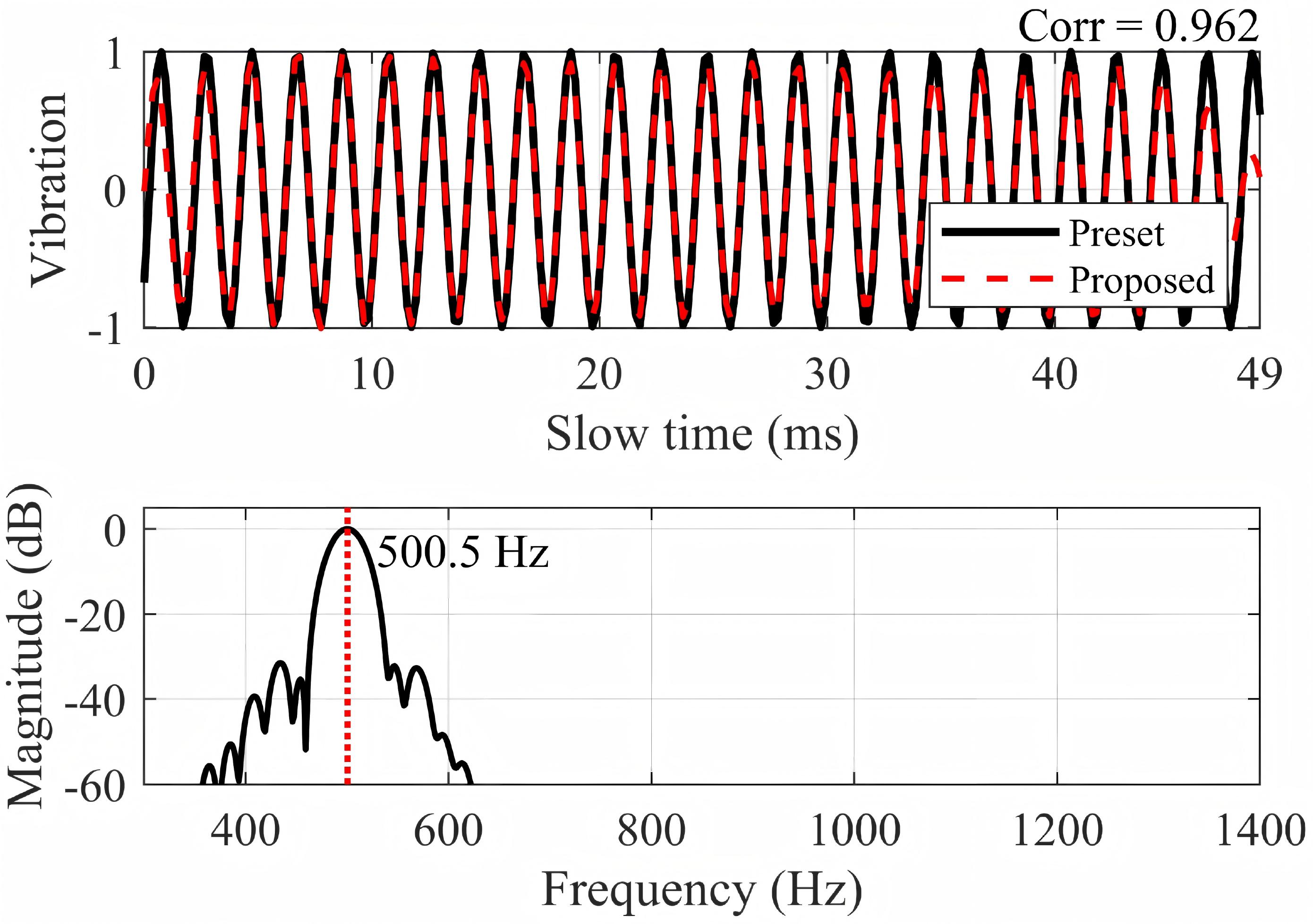}
    \caption{Recovered vibration waveform and its corresponding spectrum under the \(1\)-V PZT drive, where the upper panel shows normalized recovered DAS trace and the preset \(500\)-Hz sinusoidal drive, with a correlation coefficient of \(0.962\) and the lower panel illustrates normalized slow-time spectrum with the detected peak at \(500.5\)~Hz.}
    \label{fig:ExpVibFre1v}
\end{figure}

Fig.~\ref{fig:ExpVibFre1v} confirms that the weaker event is both detected and reconstructed. The recovered normalized trace retains the sinusoidal evolution with a correlation coefficient of \(0.962\), while the spectral maximum remains at \(500.5\)~Hz. Compared with the \(5\)-V result, the correlation decreases by \(0.028\), and larger point-to-point deviations are visible because the vibration-induced differential phase is closer to the receiver-noise and Rayleigh-fading background. Even under this lower sensing signal-to-noise ratio, the recovered frequency is unchanged, and the waveform correlation remains above \(0.96\).

\begin{figure}[!t]
    \centering
    \includegraphics[width=0.39\textwidth]{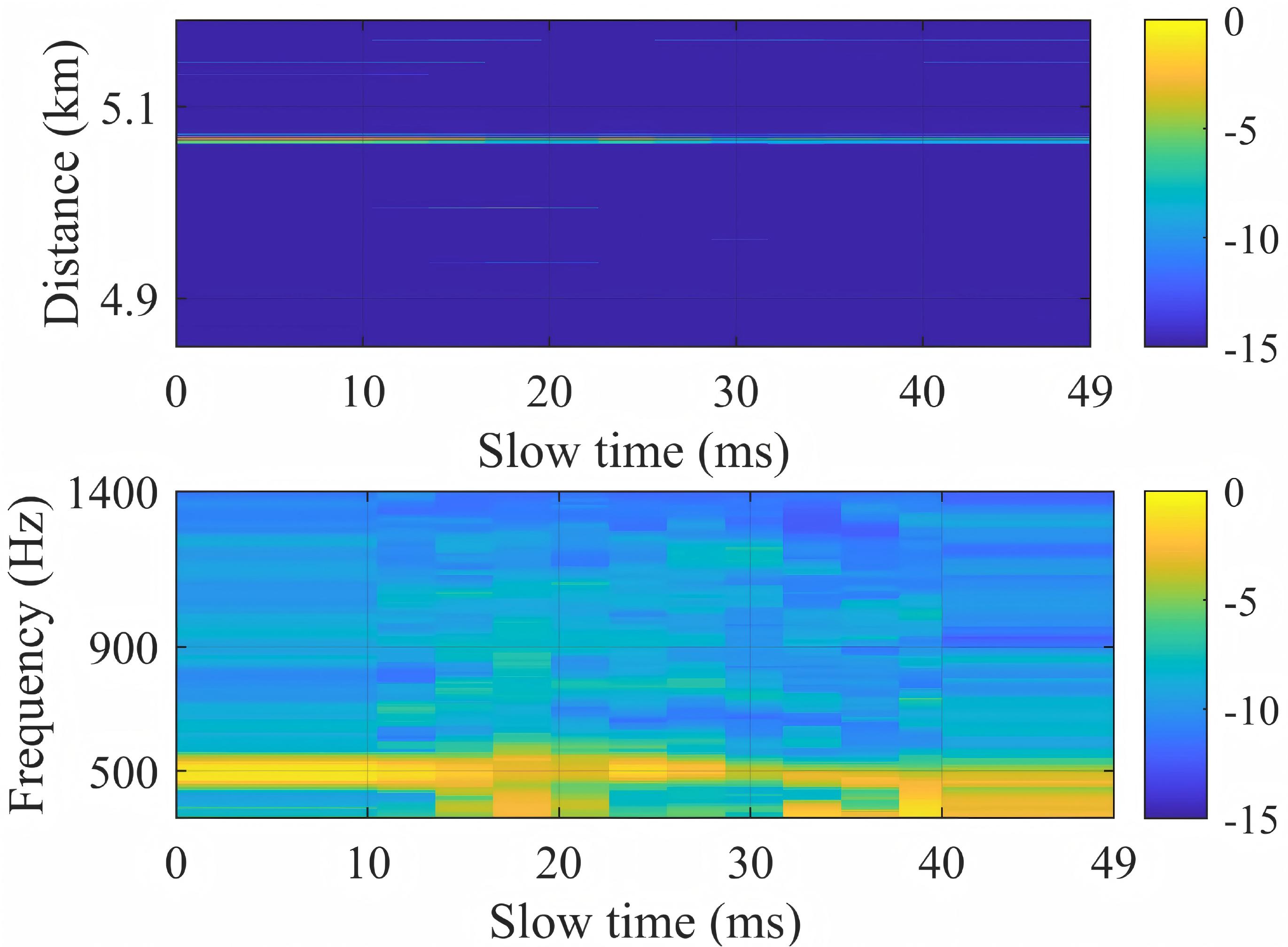}
    \caption{Blind slow-time tracking for the \(1\)-V PZT drive, where the upper panel shows the normalized distance--slow-time activity retaining the event near \(5.07\)~km and the lower panel illustrates normalized frequency--slow-time activity retaining the \(500\)-Hz component in the presence of a higher broadband background than in the \(5\)-V case.}
    \label{fig:ExpLocFre1v}
\end{figure}

The distance--slow-time map in Fig.~\ref{fig:ExpLocFre1v} retains a continuous ridge around \(5.07\)~km over the \(49\)-ms observation interval. The frequency--slow-time map exhibits a higher broadband floor and more time-varying background than the \(5\)-V PZT-drive case, as expected for the weaker excitation, but the component near \(500\)~Hz remains the dominant persistent feature. The comparison between Figs.~\ref{fig:ExpLocFre5v} and~\ref{fig:ExpLocFre1v} therefore demonstrates that the stronger vibration induced by the \(5\)-V PZT drive provides clearer spatial and spectral contrast, whereas the weaker vibration induced by the \(1\)-V PZT drive remains localizable, recoverable, and trackable using the same CP-OFDM DAS processing chain.

Overall, the experiment validates the practical sensing feasibility of the proposed CP-OFDM DAS receiver using frequency-domain channel reconstruction. The communication result in Fig.~\ref{fig:Expsch} further shows that the known frequency-domain probe can carry an actual image payload while remaining directly usable for CP-OFDM-aided frequency-domain channel reconstruction. 
The analytical spatial-ISI-free property is established under the sufficient-CP conditions of Theorem~\ref{thm:spatial_isi_free} and verified by the preceding simulations. The experimental results complement this analysis, demonstrating stable channel reconstruction, vibration localization, waveform recovery, and slow-time tracking in a \(5.2\)-km heterodyne coherent DAS link.

\section{Conclusions}\label{Conclu}
We developed a CP-OFDM DAS system for spatial-ISI-free $\phi$-OTDR using a data-bearing waveform as the sensing probe. By replacing matched-filter-based pulse compression with CP-aided frequency-domain channel reconstruction, the proposed CP-OFDM receiver directly recovers the distributed Rayleigh-backscatter channel before gauge-differential phase demodulation. The same probe also supported forward communication data recovery, providing an initial shared-waveform optical-fiber ISAC demonstration.

For the first time, distributed Rayleigh backscattering was formulated as a finite-memory sensing multipath channel. This mathematical equivalence shows that conventional pulse-compression receivers reconstruct the channel convolved with the probing-waveform autocorrelation, whose nonzero sidelobes produce deterministic waveform-induced spatial ISI. The dense-event simulations verified this mechanism and showed improvements in phase-trace mean-square error of up to 29.55~dB over the LFM matched-filter receiver. 
We further proved that, when the useful OFDM length and CP length cover the sensing multipath channel memory, the reconstructed range-bin response contains the desired channel coefficient and noise but no deterministic leakage from other range bins, thereby enabling spatial-ISI-free phase recovery. 
This property does not imply super-resolution: CP-OFDM and LFM probing retain the same bandwidth-limited spatial resolving interval under the same occupied bandwidth. Experiments over a 5.2-km fiber link verified vibration localization, waveform recovery, and data transmission, supporting both the proposed sensing multipath channel model and the practical feasibility of CP-OFDM DAS.

The present proof-of-concept uses periodically launched CP-OFDM probing blocks rather than a continuous communication waveform. Future work will investigate continuous CP-OFDM transmission with joint synchronization, channel tracking, and sensing processing to support uninterrupted communication and distributed sensing, thereby advancing toward a practical optical-fiber ISAC system.
 

\end{document}